\documentclass[letterpaper, 10 pt, conference]{ieeeconf}  %

\IEEEoverridecommandlockouts                              %
\overrideIEEEmargins
\bibliographystyle{ieeetr} %

\usepackage{graphics} %
\usepackage{epsfig} %
\usepackage{times} %
\usepackage{amsmath} %
\usepackage{amssymb}  %

\usepackage{amsthm}
\usepackage{bbm}
\usepackage{cite}
\usepackage[overload]{empheq}  %
\usepackage{tabularx}
\usepackage{longtable} %
\usepackage{colortbl}
\usepackage{bbding}
\usepackage{caption} %
\usepackage{comment} %
\usepackage{fancyhdr} %
\usepackage{lipsum} %
\usepackage{tcolorbox} %
\usepackage{stfloats} %
\theoremstyle{definition}
\newtheorem{definition}{Definition}%
\newtheorem{assumption}{Assumption}%
\newtheorem{theorem}{Theorem}%
\newtheorem{lemma}{Lemma}
\newtheorem{remark}{Remark}
\usepackage{nicefrac}
\usepackage{subfig}
\usepackage{transparent}
\usepackage{graphicx}
\usepackage{booktabs}
\usepackage{hyperref}
\usepackage{cleveref}
\usepackage{xcolor}
\usepackage{siunitx}

\newcommand\scalemath[2]{\scalebox{#1}{\mbox{\ensuremath{\displaystyle #2}}}}

\newcommand\real[1]{\mathbb{R}^{#1}}
\newcommand{\preprintswitch}[2]{#2} % arXiv

\newcounter{properties_cayley}

\Crefname{assumption}{Assumption}{Assumptions}

\title{\LARGE \bf
Unconstrained Parametrization of Dissipative and Contracting Neural Ordinary Differential Equations
}

\author{Daniele Martinelli, Clara Luc\'{i}a Galimberti, Ian R. Manchester, Luca Furieri, and Giancarlo Ferrari-Trecate
\thanks{D. Martinelli, C. L. Galimberti, L. Furieri, and G. Ferrari-Trecate are with the Institute of Mechanical Engineering, EPFL, Switzerland. E-mail addresses: \{daniele.martinelli, clara.galimberti, luca.furieri, giancarlo.ferraritrecate\}@epfl.ch.}
\thanks{Ian Manchester is with  the Australian Centre for Robotics (ACFR) and the School of Aerospace, Mechanical and Mechatronic Engineering, University of Sydney,
Australia. E-mail address: ian.manchester@sydney.edu.au.}
\thanks{Research supported by the Swiss National Science Foundation (SNSF) under the NCCR Automation (grant agreement 51NF40\textunderscore 80545). Luca Furieri is also grateful to the SNSF for the Ambizione grant PZ00P2\textunderscore208951.}
}

\begin{document}

\maketitle
\thispagestyle{empty}
\pagestyle{empty}

\begin{abstract}
In this work, we introduce and study a class of Deep Neural Networks (DNNs) in continuous-time. The proposed architecture stems from the combination of Neural Ordinary Differential Equations (Neural ODEs) with the model structure of recently introduced Recurrent Equilibrium Networks (RENs). We show how to endow our proposed NodeRENs with contractivity and dissipativity  --- crucial properties for robust learning and control. Most importantly, as for RENs, we derive parametrizations of contractive and dissipative NodeRENs which are unconstrained, hence enabling their learning for a large number of parameters.
We validate the properties of NodeRENs, including the possibility of handling irregularly sampled data, in a case study in nonlinear system identification.
\end{abstract}
\section{Introduction}
\label{sec:introduction}
Learning complex nonlinear mappings from data is a fundamental challenge in various engineering applications, including computer vision, healthcare, internet %
of things%
, and smart cities~\cite{sarker2021machine}.
Deep Neural Networks (DNNs) have emerged as powerful tools for this task, thanks to their adaptability and ability to generalize their predictions from large amounts of data.
However, DNNs often lack robustness: a slight change in the input data may yield highly different outputs, eventually leading to poor generalization capabilities~\preprintswitch{\cite{szegedy2013intriguing,HaberRuthotto2017Stablearchitecturesfordnn, zakwan2023contractiveHDNNs}}{\cite{szegedy2013intriguing,HaberRuthotto2017Stablearchitecturesfordnn, Cheng_Yi_2020_Seq2Sick, zakwan2023contractiveHDNNs}}. 
This deficiency becomes especially critical when DNNs are implemented in real-world systems like safety-critical power grids or human-interacting robots. In these applications, sensor measurements are inevitably affected by multiple sources of noise and uncertainty, and a lack of robustness may result in substantial losses.

To endow DNN models with formal stability and robustness guarantees, \cite{HaberRuthotto2017Stablearchitecturesfordnn} proposes to equip DNN layers with a dynamical system interpretation. The work \cite{kim2018standard} establishes dynamical DNN models which universally approximate all nonlinear dynamical systems defined in discrete-time and proposes a set of convex constraints that enforce the stability of the DNN model during training. 
In \cite{revay2023recurrent}, the authors have developed discrete-time Recurrent Equilibrium Networks~(RENs) that result from the closed-loop interconnection of a discrete-time linear dynamical system with a static nonlinearity. 
A main contribution of \cite{revay2023recurrent} is to provide an unconstrained parametrization {(also known as \textit{direct} parametrization)} of a class of RENs with properties of stability and dissipativity that are \emph{built-in}, i.e.,  that holds for any choice of the parameters, and without the need to constrain them to a subset. This property enables parameter optimization for very deep models through unconstrained-gradient-descent-based algorithms, while ensuring stability and dissipativity at any iteration.
The usefulness of RENs for system identification and optimal control has been demonstrated in \cite{revay2023recurrent} and \cite{Wang2021YoulaREN}, respectively. However, the corresponding stability and dissipative guarantees are only compatible with discrete-time or sampled-data systems.
Studying how the properties of nonlinear discrete-time dynamics port to their continuous-time counterpart is usually a challenging problem \cite{aastrom1984zeros}.

The recently proposed Neural Ordinary Differential Equations (Neural ODEs)~\cite{chenNeuralOrdinaryDifferential2019} offer a bridge between DNNs and continuous-time dynamics. By unfolding infinitely many DNN layers using a parametrized ODE and employing the \textit{adjoint sensitivity method} for training~\cite{chenNeuralOrdinaryDifferential2019}  , Neural ODEs offer several advantages over discrete DNN models. These include memory efficiency, adaptive computations beyond Euler-like integration methods, and the ability to handle data arriving at arbitrary time instants.

Expanding on Neural ODEs, \cite{kidger2020neural} introduced a continuous-time analogous of Recurrent Neural Networks, albeit without stability or dissipativity guarantees. 
The work \cite{galimberti2023hamiltonian} has proposed stable Hamiltonian Neural ODEs which further exhibit non-vanishing gradient flows. Additionally, \cite{zakwan2023contractiveHDNNs} developed a class of Hamiltonian ODEs with contractivity properties. Furthermore,  \cite{massaroli2020stable} derived architectures ensuring convergence to a stable set. 
However, an architecture combining the robustness and expressiveness of~\cite{kim2018standard,revay2023recurrent} with the benefits of Neural ODEs is not available to the best of our knowledge.
\subsection{Contributions}
In this paper, we establish an alternative version of the REN architecture of \cite{revay2023recurrent} that is compatible with Neural ODEs in continuous-time. Consequently, we call our architectures \textit{NodeRENs}.
After showing that NodeRENs induce {well-posed} dynamical systems in continuous-time, we establish that our models enjoy stability and dissipativity properties {by design}, %
i.e., without any constraints on the space of the parameters that describe the DNNs — akin to their discrete-time counterpart~\cite{revay2023recurrent}. 
NodeRENs, also inherit the advantages of Neural ODEs in continuous-time; they are compatible with any integration scheme that can be chosen based on a trade-off between accuracy and computational resources, and they can be evaluated at any chosen time instant, without the requirement to be uniformly sampled in time.

The paper is structured as follows. In \Cref{sec:preliminary_knowledge}, we describe the problem setting, including the definition of contractivity and Integral Quadratic Constraints~(IQCs). In \Cref{sec:NODEREN}, we introduce the {NodeREN} model and detail the steps to obtain NodeRENs that are contracting and dissipative by design. %
Moreover, in \Cref{sec:Simulations_and_results} the properties and performance of NodeRENs are validated on a nonlinear system identification problem. Finally, in \Cref{sec:Conclusions} we summarize the conclusions of this work, outlining potential future directions for research.

\preprintswitch{For the sake of conciseness, the proofs of all theorems in this work are reported in the Appendix of the full version~\cite{martinelli2023unconstrained}.
}{}
\subsection{Notation}
We denote the set of non-negative real numbers as $\real{+}_0$. 
For $T>0$, let $PC([0,T],\real{n})$ be the space of piecewise-continuous functions in the time interval $[0,T]$.
We represent the Euclidean norm of $v \in \real{n}$ with $|v|$.
For a square matrix $X$, we use the notation $X \succ 0$ $(X \succeq 0)$ and $X\prec 0 $ $ (X\preceq 0) $ to denote positive (semi-) definiteness and negative (semi-) definiteness, respectively. 
The minimum and maximum eigenvalues of the square matrix $X$ are denoted as $\lambda_{min}(X)$ and $\lambda_{max}(X)$, respectively. 
We use $(*)$ to represent a symmetric term in a quadratic expression, e.g., \((X+Y)Q(*)^\top = (X+Y)Q(X+Y)^\top  \), for some $Q \in \real{n \times n}$ and $X,Y \in \real{q \times n}$. With $(*)$, we also indicate elements in symmetric matrices that can be obtained by symmetry.
We use $[{M}]_{p \times m}$ to indicate the block matrix with the first $p$ rows and $m$ columns of ${M}\in\real{n\times n}$ with $n\ge p$ and $n \ge m$.

\section{Problem Formulation}
\label{sec:preliminary_knowledge}
Consider a nonlinear system $\Sigma_\theta$ in continuous-time
\begin{equation}\label{eq:diff_system}
        \Sigma_{\theta} \: = \:
\begin{cases}
    \: \dot{x}(t) = f_{\theta}(x(t),u(t))\\
    \: y(t) = g_{\theta}(x(t),u(t))
\end{cases}\,,
\end{equation}
where $x(t)\in \real{n}$, $y(t)\in \real{p}$ and $u(t)\in \real{m}$ denote the state, output and input of the system at any time $t \in \real{+}_0$, respectively, and $x(0) = x_0$.   In~\eqref{eq:diff_system}, $\theta \in \real{n_\theta}$ denotes a vector of parameters affecting the behavior of the system.
{The idea behind Neural ODEs \cite{chenNeuralOrdinaryDifferential2019} is to interpret some specific DNNs (e.g., ResNets~\cite{he2016deepresnetpaper}) as discretized ODEs. Then, the family of Neural ODEs can be obtained when the discretization step converges to zero, effectively passing from discrete-time DNNs 
to continuous-time models as \eqref{eq:diff_system}}.%
\footnote{For instance, standard ResNet %
models whose hidden states evolve as  $h_{k+1} = h_{k} + \Delta t\, f(h_k,\theta_k)$ can be interpreted as the ODE $\dot{h}(t) = f (h(t),\theta(t))$ as the discretization step $\Delta t$ tends to zero.}
Accordingly, in this paper, we consider the following learning problem:
\begin{equation}\label{eq:minimization_problem}
\begin{aligned}
&\min_{\theta} \quad {L}(\theta,{\mathcal{Z}}) \\
&\textrm{subject to} \;\; \eqref{eq:diff_system}\,.
\end{aligned}
\end{equation}
$L$ is a scalar loss function that can depend on the trajectories ${x}(t)$ and ${y}(t)$ of \eqref{eq:diff_system}, for $t \in [0,T]$ with $T>0$, and ${\mathcal{Z}}$ is a given training dataset. 
For instance, in a classification task, $y(T)$ can play the role of the DNN output layer which is compared to the label $y^{j}_{label}$ of the point $x^j(0)$ contained in a dataset ${\mathcal{Z}} = \left\{(x^j(0),y^{j}_{label})\right\}_{j=1}^N$ for $N\in \mathbb{N}$.
Neural ODEs can exploit any chosen numerical method to perform the forward propagation and the gradient computations through \eqref{eq:diff_system}, including those based on adaptive sampling to guarantee a desired level of precision (e.g., Dormand-Prince, Bogacki–Shampine~\cite{Hairer1993solvingodeI}).
Although possible, back-propagating through numerical solver's operations can be highly expensive in terms of memory and introduce numerical errors. As an alternative, to solve \eqref{eq:minimization_problem} one can use the adjoint method — we refer to \cite{chenNeuralOrdinaryDifferential2019} for full details. %

Neural-ODEs in their general form are not guaranteed to yield stable or dissipative dynamical flows.
Such properties are fundamental in optimal control and system identification, as well as in robust learning problems dealing with noisy features \cite{zakwan2023contractiveHDNNs} and adversarial attacks~\cite{kang2021stable}. 
 Specifically, motivated by \cite{revay2023recurrent}, in this paper, we focus on continuous-time systems that are {contracting} and systems that satisfy incremental~IQCs. We proceed with formally defining both. 
 
 Firstly, given a system $\Sigma_\theta$ with initial condition $a \in \real{n}$ and an input function $u^{[1]} \in PC([0,\infty],\real{m})$, let $x_a^{[1]}$ and $y_a^{[1]}$ denote the corresponding state and output trajectories, respectively.
\begin{definition}%
\label{def:Contractivity_continuous_time}
A system $\Sigma_\theta$ in the form \eqref{eq:diff_system} is said to be \textit{contracting} if for any two initial conditions $a,b \in \real{n}$ and given the same input trajectory ${u}^{[1]} \in PC([0,\infty],\real{m})$, the corresponding state trajectories $x_a^{[1]}$ and $x_b^{[1]}$ satisfy:
\begin{equation}\label{eq:definition_contracting_system}
    | x_a^{[1]}(t) - x_b^{[1]}(t) | \le \kappa e^{-ct} | a - b | \,,
\end{equation}
for all $t\in \real{+}_0$ and for some $c>0$, $\kappa > 0$.
\end{definition}
The \Cref{def:Contractivity_continuous_time} can be interpreted as follows: a contracting system `forgets' the initial condition exponentially fast as time progresses. Hence, all trajectories converge to each other, independently of the initial state.

Next, we define dissipative systems that satisfy incremental IQCs. 
Dissipative systems cannot increase their internal energy despite external inputs (e.g., feedback control actions or disturbances). As such, they can be designed to possess finite input-output gains, a crucial property in nonlinear control theory~\cite{van2000l2}, as well as robust learning~\cite{pauli2021training}. In order to formally define these properties, let $a,b \in \real{n}$ be two initial conditions and $u^{[1]},u^{[2]} \in PC([0,\infty],\real{m})$ be two input trajectories. Define the relative displacements as
\begin{equation}\label{eq:delta_definition_y_u_x}
    \begin{gathered}
        \Delta y(t) = y_{a}^{[1]}(t) - y_{b}^{[2]}(t) \: , \: \Delta u(t) = u^{[1]}(t) - u^{[2]}(t)\: , \\
    \Delta x(t) = x_{a}^{[1]}(t) - x_{b}^{[2]}(t) \: .
    \end{gathered}
\end{equation}
Let
\begin{equation}\label{eq:definition_supply_rate_Q_S_R}
    s_\Delta(\Delta u(t),\Delta y(t)) =\begin{bmatrix}
    \Delta y(t)\\
    \Delta u(t)
    \end{bmatrix}^\top 
    \begin{bmatrix}
    Q & S^\top \\
    S & R
    \end{bmatrix}
    \begin{bmatrix}
    \Delta y(t)\\
    \Delta u(t)
    \end{bmatrix}  \,,  
\end{equation}
be a quadratic function 
\(
     s_\Delta  :  \real{m} \times \real{p} \rightarrow \real{}
\) %
parametrized by $Q \in \real{p \times p}$, $S \in \real{m \times p}$, $R \in \real{m \times m}$. We recall the definition of dissipative systems that satisfy IQCs according to the supply rate \eqref{eq:definition_supply_rate_Q_S_R}.
\begin{definition}%
\label{def:Incremental_IQC}
A system $\Sigma_\theta$ in form \eqref{eq:diff_system} satisfies the 
\textit{incremental} IQC defined by the matrices $(Q,S,R)$, with $Q \preceq 0$ and $R = R^\top$,
if there exists a function $\mathcal{S}:\real{n} \to \mathbb{R}^+_0$ such that, 
for any {two} initial conditions $a,b\in \real{n}$ and any two possible input functions $u^{[1]},u^{[2]} \in PC([0,\infty],\real{m})$, 
\begin{equation}
    \label{eq:definition_IQC_continuous_time}
    \mathcal{S}(\Delta x(t_1)) \le \mathcal{S}(\Delta x(t_0)) 
    + \int_{t_0}^{t_1} s_\Delta(\Delta u(t),\Delta y(t))dt \,,
\end{equation}
for every $t_1 \geq t_0$, where $\Delta x$, $\Delta y$ and $\Delta u$ are defined in \eqref{eq:delta_definition_y_u_x} and the supply rate $s_\Delta$ is defined in \eqref{eq:definition_supply_rate_Q_S_R}.
\end{definition}

Despite being restricted to quadratic supply functions according to \eqref{eq:definition_supply_rate_Q_S_R}, IQCs can certify many incremental properties by appropriately selecting the values of $(Q,S,R)$. It is worth noting that the assumptions in \Cref{def:Incremental_IQC}
\begin{equation}\label{eq:Assumptions_Q_R}
    Q \preceq 0 \,, \quad R = R^\top\,,
\end{equation}
are fulfilled for several incremental properties of interest, see \Cref{tab:choice_Q_S_R}.
\begin{table}[ht]
\centering
\begin{tabular}{@{}l|ccc|l@{}}
\toprule
Incremental Property & $Q$ & $R$ & $S$             & $s_\Delta(\Delta u, \Delta y)$ \\ \midrule
$L_2$- gain bound $(\gamma\ge0)$& $- \frac{1}{\gamma}I$ & $\gamma I$ & $0$ & $\gamma | \Delta u |^2 - \frac{1}{\gamma} | \Delta y |^2$ \\
Passivity            & $0$ & $0$ & $\frac{1}{2} I$ & $\Delta u^\top \Delta y$       \\
Input Passivity $(\nu\ge0)$   & $0$                   & $-2\nu I$  & $I$ & $\Delta u^\top \Delta y -\nu | \Delta u |^2$         \\
Output Passivity $(\varepsilon\ge0)$ & $-2 \varepsilon I$    & $0$        & $I$ & $\Delta u^\top \Delta y -\varepsilon | \Delta y |^2$ \\ \bottomrule
\end{tabular}
\caption{Choices of $(Q,S,R)$ to verify different incremental properties.}
\label{tab:choice_Q_S_R}
\end{table}

To summarize, our goal is to learn the parameters $\theta$ of a dynamical system \eqref{eq:diff_system} — i.e., a Neural ODE — that optimize a given cost as per \eqref{eq:minimization_problem}, with the hard constraint that either \eqref{eq:definition_contracting_system} or \eqref{eq:definition_IQC_continuous_time} (or both) hold.
This new requirement (i.e., the hard constraint) must be satisfied for all the parameters $\theta$ we optimize over.
In other words, we consider a form of fail-safe learning, in the sense that the property of the model to be contracting or dissipative must be guaranteed during and after parameter optimization.
\begin{remark}
\label{re:contractive_integration}
In solving \eqref{eq:minimization_problem}, one must select a numerical integration scheme for simulating the trajectories of \eqref{eq:diff_system} to be used in forward propagation and gradient computations. However, the properties of the continuous-time model \eqref{eq:diff_system} may not be preserved in the discretized one, in general. While our focus in this paper is on the contractivity and dissipativity properties of the continuous-time model \eqref{eq:diff_system}, it is worth noting that integration methods preserving contractivity are discussed in \cite{manchester2017contracting}. Adapting these methods to IQC properties, or developing new ones, represents an interesting future research direction. %

\end{remark}
\section{Contractivity and Dissipativity of NodeRENs}
\label{sec:NODEREN}
In this section, we present and analyze novel DNN structures that arise from appropriately combining Neural ODEs~\cite{chenNeuralOrdinaryDifferential2019} with RENs~\cite{revay2023recurrent}. The starting idea is to define the functions $f_\theta$ and $g_\theta$ in \eqref{eq:diff_system} through the model utilized for the layer equation in \cite{revay2023recurrent}.
Specifically, we consider
\begin{gather}
\label{eq:NODEREN_linear_part}
    \begin{bmatrix}
    \Dot{x}(t) \\
    v(t) \\
    y(t)
    \end{bmatrix}
     = 
    \overbrace{{\begin{bmatrix}
    A & B_1 & B_2 \\
    C_1 & D_{11} & D_{12} \\
    C_2 & D_{21} & D_{22}
    \end{bmatrix}}}^{\tilde{A}}
    \begin{bmatrix}
    x(t) \\
    w(t) \\
    u(t)
    \end{bmatrix}
     + 
    \overbrace{\begin{bmatrix}
    b_x \\
    b_v \\
    b_y
    \end{bmatrix}}^{\tilde{b}},\\
    w(t) = \sigma ( v(t) )\label{eq:NODEREN_nonlinear_part},
\end{gather}
where \( x(t) \in \real{n}\), \(u(t) \in \real{m} \) and \(y(t) \in \real{p} \) are respectively the state, the input, and output at time $t$. 
The function $\sigma(\cdot)$ represents a nonlinear map and it is applied entry-wise.
The input and output of $\sigma(\cdot)$ are \(v(t), w(t) \in \mathbb{R}^q \), respectively. 
We denote system \eqref{eq:NODEREN_linear_part}-\eqref{eq:NODEREN_nonlinear_part} as a NodeREN, and it can be interpreted as an affine time-invariant system in closed-loop with a static nonlinearity $\sigma(\cdot)$.  
In NodeRENs, the set of trainable parameters $\theta \in \mathbb{R}^{n_\theta}$ consists of the set $\Tilde{A}$ of matrices  $(A$, $B_1$, $B_2$, $C_1$, $C_2$, $D_{11}$, $D_{12}$, $D_{21}$ and $D_{22})$ and the set $\tilde{b}$ of vectors $(b_x,b_v,b_y)$ in \eqref{eq:NODEREN_linear_part}. It is worth noting that the model \eqref{eq:NODEREN_linear_part}-\eqref{eq:NODEREN_nonlinear_part} is highly flexible, as it encompasses many important neural network architectures. For more details, see~\cite{revay2023recurrent}.

The work \cite{revay2023recurrent} has established conditions that ensure contractivity and IQC properties of the \emph{discrete-time}  dynamics induced by $\theta$. However, the same conditions on $\theta$ fail to ensure these properties for the continuous-time solutions of \eqref{eq:NODEREN_linear_part}-\eqref{eq:NODEREN_nonlinear_part}, in general. 

Towards establishing contractivity and IQC-based dissipativity in continuous-time for the model \eqref{eq:NODEREN_linear_part}-\eqref{eq:NODEREN_nonlinear_part}, consider two different possible trajectories of the system, starting from two initial conditions $a$, $b \in \real{n}$ and two input functions $u^{[1]}$, $u^{[2]} \in PC([0, \infty], \real{m})$.
Then, define the \textit{incremental form} of the system \eqref{eq:NODEREN_linear_part}-\eqref{eq:NODEREN_nonlinear_part}
\begin{gather}
\label{eq:NODEREN_INCREMENTAL_linear_part}
    \begin{bmatrix}
    \Delta\Dot{x}(t) \\
    \Delta v(t) \\
    \Delta y(t)
    \end{bmatrix}
    =  
    \begin{bmatrix}
    A & B_1 & B_2 \\
    C_1 & D_{11} & D_{12} \\
    C_2 & D_{21} & D_{22}
    \end{bmatrix}
    \begin{bmatrix}
    \Delta x(t) \\
    \Delta w(t) \\
    \Delta u(t)
    \end{bmatrix} \,,
    \\
    \Delta w(t) = \sigma ( v_b^{[2]}(t) + \Delta v(t) ) - \sigma(v_b^{[2]}(t)) \,,
    \label{eq:NODEREN_INCREMENTAL_nonlinear_part}
\end{gather}
where $\Delta x$, $\Delta y$ and $\Delta u$ are defined in \eqref{eq:delta_definition_y_u_x}. Moreover, $\Delta v(t) = v_a^{[1]}(t) -v_b^{[2]}(t)$, where $v_a^{[1]}$ and $v_b^{[2]}$ are the  inputs of $\sigma(\cdot)$ 
for each trajectory.
Next, we introduce two technical assumptions.
\begin{assumption}%
\label{assumption:rate_limited_sigma}
The function $\sigma(\cdot)$ belongs to  $PC([0,\infty],\real{})$ and its slope is restricted to the interval $[0,1]$, that is
\begin{equation*}
 \label{eq:rate_limited_sigma}
    0\leq \frac{ \sigma(y) - \sigma(x) }{y - x} \leq 1  \:, 
    \quad \forall x,y \in \mathbb{R} \:, 
    \quad  x \neq y.
\end{equation*}
\end{assumption}
It is important to notice that, under Assumption 1, $\Delta v(t)$ and $\Delta w(t)$  verify the following inequality
\begin{equation}
    \Gamma(t) = 
    \begin{bmatrix}
    \Delta v(t) \\
    \Delta w(t)
    \end{bmatrix}^\top
    \begin{bmatrix}
    0  &  \Lambda   \\
    \Lambda  &  -2\Lambda
    \end{bmatrix}
    \begin{bmatrix}
    *
    \end{bmatrix}
    \ge 0 ,\quad \forall t \in \mathbb{R},
    \label{eq:Gamma_conic_combination}
\end{equation}
for any diagonal matrix $\Lambda \succ 0$.
Note that most of the popular activation functions used in the literature, such as the logistic function, $ReLU(\cdot)$ and $\tanh{(\cdot)}$, satisfy \Cref{assumption:rate_limited_sigma}.

\begin{assumption}\label{assumption:D_11_lower_triangular}
    $D_{11}$ in \eqref{eq:NODEREN_linear_part} is strictly lower-triangular.
\end{assumption}

\Cref{assumption:D_11_lower_triangular} enforces that each scalar entry of $\Delta v(t)$ only depends on the ones above it  through \eqref{eq:NODEREN_INCREMENTAL_linear_part}. Hence, it becomes possible to explicitly compute $\Delta v(t)$. This simplifies the numerical calculation of the solutions to \eqref{eq:NODEREN_INCREMENTAL_linear_part}-\eqref{eq:NODEREN_INCREMENTAL_nonlinear_part}, while guaranteeing that the model is very expressive thanks to the recursive application of the nonlinearity $\sigma$ on successive entries of $\Delta v(t)$. The case where $D_{11}$ is not lower-triangular is left for future work. We refer the interested reader to~\cite{revay2023recurrent} for a discussion on implicit RENs in discrete-time.

In general, an ODE model may not admit a unique solution for a given initial condition and input trajectory \cite{lygeros2010lecture}. We now proceed to show that \eqref{eq:NODEREN_linear_part}-\eqref{eq:NODEREN_nonlinear_part} admits a unique solution for any choice of the parameters $\theta$.
\begin{lemma}%
\label{lemma:existence_uniqueness}
Let \Cref{assumption:rate_limited_sigma,assumption:D_11_lower_triangular} hold. Then, the model \eqref{eq:NODEREN_linear_part}-\eqref{eq:NODEREN_nonlinear_part} admits a unique solution in time for any $u(t) \in PC([0,\infty],\real{m})$, $x(0)\in \real{n}$ and for any choice of the parameters $\theta \in \mathbb{R}^{n_\theta}$.
\end{lemma}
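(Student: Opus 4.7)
The plan is to reduce the implicit NodeREN system to a standard ODE with a Lipschitz right-hand side, and then invoke classical Carathéodory-type existence and uniqueness results.

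First I would exploit \Cref{assumption:D_11_lower_triangular}: because $D_{11}$ is strictly lower-triangular, the algebraic loop $v(t)=C_1 x(t)+D_{11}\sigma(v(t))+D_{12}u(t)+b_v$ can be solved entry-by-entry. Concretely, I would define $\tilde{w}(x,u)\in\real{q}$ recursively by
\begin{equation*}
v_i = [C_1]_i\, x + \sum_{j<i}[D_{11}]_{ij}\,\tilde{w}_j(x,u) + [D_{12}]_i\, u + [b_v]_i,\qquad \tilde{w}_i(x,u)=\sigma(v_i),
\end{equation*}
for $i=1,\ldots,q$. This gives a closed-form expression for $w(t)=\tilde{w}(x(t),u(t))$ that no longer involves an implicit equation.

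Next, I would show that $\tilde{w}$ is globally Lipschitz in $(x,u)$. This follows by induction on $i$: $\tilde{w}_1$ is the composition of $\sigma$ with an affine function, hence Lipschitz by \Cref{assumption:rate_limited_sigma}; and if $\tilde{w}_1,\ldots,\tilde{w}_{i-1}$ are Lipschitz, then so is $v_i$, and hence $\tilde{w}_i=\sigma(v_i)$ because $\sigma$ has slope in $[0,1]$. Substituting into \eqref{eq:NODEREN_linear_part} yields
\begin{equation*}
\dot{x}(t)=F(x(t),u(t)),\qquad F(x,u):=Ax+B_1\tilde{w}(x,u)+B_2 u+b_x,
\end{equation*}
where $F$ is (globally) Lipschitz in $x$, uniformly in $u$, and continuous in $u$.

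For any $u\in PC([0,\infty),\real{m})$, the map $t\mapsto F(x,u(t))$ is piecewise-continuous in $t$ and globally Lipschitz in $x$. Applying the Carathéodory/Picard–Lindel\"of existence and uniqueness theorem on each subinterval where $u$ is continuous, and concatenating the solutions across the (at most countable) discontinuity points of $u$, yields a unique absolutely continuous solution $x(\cdot)$ on every bounded interval. To rule out finite-time blow-up I would invoke the linear growth bound $|F(x,u)|\le c_1+c_2|x|+c_3|u|$ (which holds because $|\tilde{w}(x,u)|\le|\tilde{w}(0,0)|+L(|x|+|u|)$ by the Lipschitz property) and a standard Gr\"onwall argument, extending the solution to all of $[0,\infty)$. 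The unique $w(t)$ and $y(t)$ are then recovered algebraically from $x(t)$ and $u(t)$ via $\tilde{w}$ and the output equation.

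The main obstacle is not any single step but rather verifying that the strict lower-triangularity of $D_{11}$ indeed suffices to make $\tilde{w}$ well-defined and Lipschitz without any smallness condition on $D_{11}$; once that recursive construction is in place, the rest is a textbook ODE argument. A minor technical point is the careful handling of the piecewise-continuous input $u$, for which Carathéodory's framework (rather than the classical Picard–Lindel\"of theorem for continuous right-hand sides) is the cleanest tool.
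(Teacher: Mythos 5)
Your proposal is correct and takes essentially the same route as the paper's proof: both resolve the algebraic loop entry-by-entry using the strict lower-triangularity of $D_{11}$ and the slope bound on $\sigma$ to obtain a global Lipschitz constant for $w$ as a function of $x$, and then invoke a standard existence--uniqueness theorem for ODEs with right-hand sides that are globally Lipschitz in $x$ and piecewise continuous in $t$. The only cosmetic difference is that you unpack that standard theorem (Carath\'eodory on subintervals, concatenation, Gr\"onwall to exclude blow-up), whereas the paper cites it directly; the Gr\"onwall step is in fact automatic once global Lipschitzness is established.
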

The proof, reported in \preprintswitch{\cite{martinelli2023unconstrained}}{\Cref{appendix:proof_f_is_unique}} for completeness, is based on deriving a global Lipschitz constant by iterating through the nonlinearities defining the vector $v(t)$.

\subsection{Characterizing Contracting and Dissipative NodeRENs}
Here, we derive sufficient conditions over the parameters $\theta$ to guarantee contractivity and {dissipativity} of NodeRENs \eqref{eq:NODEREN_linear_part}-\eqref{eq:NODEREN_nonlinear_part} according to a specified supply rate \eqref{eq:definition_supply_rate_Q_S_R}. 
\begin{theorem}%
\label{theorem:NODEREN_LMI_contractivity}
A {NodeREN} \eqref{eq:NODEREN_linear_part}-\eqref{eq:NODEREN_nonlinear_part} is contracting according to \eqref{eq:definition_contracting_system}, if there exists a matrix $P\succ0$ and a diagonal matrix $\Lambda\succ0$ such that
\begin{gather}
\begin{bmatrix}
    -A^\top P - PA  &  -C_1^\top\Lambda - PB_1 \\
   * & W
\end{bmatrix}\succ 0 \,,\label{ineq:LMI_NODEREN_contractivity}
\\
W = 2\Lambda - \Lambda D_{11} - D_{11}^\top \Lambda. \label{eq:definition_W}
\end{gather}
\end{theorem}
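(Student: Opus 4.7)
The plan is to construct an incremental Lyapunov function based on $P$ and use the slope-restriction slack $\Gamma(t)\ge 0$ from \eqref{eq:Gamma_conic_combination} as an S-procedure multiplier to absorb the nonlinearity into the linear part of the incremental dynamics. First, I would consider two trajectories sharing the same input (so $\Delta u \equiv 0$) and write down the incremental system \eqref{eq:NODEREN_INCREMENTAL_linear_part}--\eqref{eq:NODEREN_INCREMENTAL_nonlinear_part}, which by \Cref{assumption:D_11_lower_triangular} and \Cref{lemma:existence_uniqueness} is well-defined with a unique solution. Then I would define $V(\Delta x) = \Delta x^\top P \Delta x$ and compute
\begin{equation*}
\dot V = \Delta x^\top (A^\top P + P A)\Delta x + 2\Delta x^\top P B_1\,\Delta w.
\end{equation*}

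The key step is to add the nonnegative quantity $\Gamma(t) = 2\Delta v^\top \Lambda \Delta w - 2\Delta w^\top \Lambda \Delta w$, which is nonnegative by \Cref{assumption:rate_limited_sigma} applied componentwise, and then to substitute $\Delta v = C_1 \Delta x + D_{11}\Delta w$. After collecting terms, I expect
\begin{equation*}
\dot V + \Gamma(t) =
\begin{bmatrix}\Delta x \\ \Delta w\end{bmatrix}^\top
\begin{bmatrix}
A^\top P + P A & P B_1 + C_1^\top \Lambda \\ * & -W
\end{bmatrix}
\begin{bmatrix}\Delta x \\ \Delta w\end{bmatrix},
\end{equation*}
with $W$ as in \eqref{eq:definition_W}. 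Hypothesis \eqref{ineq:LMI_NODEREN_contractivity} says exactly that the displayed matrix is negative definite, so there exists $\epsilon>0$ (for instance, the smallest eigenvalue of the LMI matrix) such that
\begin{equation*}
\dot V + \Gamma(t) \le -\epsilon\bigl(|\Delta x|^2 + |\Delta w|^2\bigr).
\end{equation*}
Dropping the $|\Delta w|^2$ term and using $\Gamma(t)\ge 0$ yields $\dot V \le -\epsilon |\Delta x|^2 \le -2c\, V$ with $c = \epsilon/(2\lambda_{\max}(P))$.

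Finally, a Gr\"onwall/comparison argument gives $V(\Delta x(t)) \le e^{-2ct} V(\Delta x(0))$, and sandwiching $V$ between $\lambda_{\min}(P)|\Delta x|^2$ and $\lambda_{\max}(P)|\Delta x|^2$ delivers \eqref{eq:definition_contracting_system} with $\kappa = \sqrt{\lambda_{\max}(P)/\lambda_{\min}(P)}$ and the rate $c$ above. The main obstacles I anticipate are essentially bookkeeping: verifying the algebraic identity that $\dot V + \Gamma(t)$ produces precisely the LMI block matrix (in particular tracking the sign and the cross term $PB_1 + C_1^\top\Lambda$), and justifying that the slope-restriction constraint \eqref{eq:Gamma_conic_combination} from \Cref{assumption:rate_limited_sigma} indeed provides a valid pointwise S-procedure multiplier despite $\Delta w$ being implicitly defined through \eqref{eq:NODEREN_INCREMENTAL_nonlinear_part}. \Cref{assumption:D_11_lower_triangular} plays no direct role in the inequality itself, but ensures that $\Delta w$ is well-defined at each $t$ so that the chain of equalities above makes sense.
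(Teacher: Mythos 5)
Your proposal is correct and follows essentially the same route as the paper: an incremental quadratic Lyapunov function $V(\Delta x)=\Delta x^\top P\Delta x$, the slope-restriction multiplier $\Gamma(t)\ge 0$ from \eqref{eq:Gamma_conic_combination} used as an S-procedure term, and the observation that the LMI \eqref{ineq:LMI_NODEREN_contractivity} makes $\dot V+\Gamma(t)$ a negative definite quadratic form in $(\Delta x,\Delta w)$. The only difference is bookkeeping in extracting the decay rate (you bound by the smallest eigenvalue of the LMI matrix and apply Gr\"onwall, while the paper strengthens the LMI to include a $-\omega P$ term via a Schur-complement argument and invokes Lyapunov's exponential stability theorem); both yield \eqref{eq:definition_contracting_system} with the same kind of constants.
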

\preprintswitch{}{The proof is reported in~\Cref{appendix:proof_theorem_contractivity}.} For the rest of the paper, we refer to NodeRENs that comply with \eqref{ineq:LMI_NODEREN_contractivity}-\eqref{eq:definition_W} as {C-NodeRENs}.
\begin{remark}
    One may want to impose a certain convergence rate for a {{C-NodeREN}}. For this purpose, \eqref{ineq:LMI_NODEREN_contractivity} can be modified as follows:
\begin{equation*}
\label{ineq:LMI_NODEREN_contractivity_with_alpha_star}
    \begin{bmatrix}
    -A^\top P -PA -2 \Tilde{\omega} P &  -C_1^\top\Lambda - PB_1 \\
    *  & W
    \end{bmatrix}
    \succ 0\,,
\end{equation*}
where  the term $-2\tilde{\omega}P$ enforces that \eqref{eq:definition_contracting_system} holds with $c=\tilde{\omega}$. This can be directly seen by following the proof of \Cref{theorem:NODEREN_LMI_contractivity}\preprintswitch{.}{ starting from \eqref{ineq:LMI_NODEREN_contractivity_with_alpha}.}
\end{remark}
Next, we characterize NodeRENs that comply with incremental IQCs.
\begin{theorem}%
\label{theorem:NODEREN_LMI_robustness}
A NodeREN \eqref{eq:NODEREN_linear_part}-\eqref{eq:NODEREN_nonlinear_part} satisfies the incremental IQC described by the triple $(Q,S,R)$ fulfilling \eqref{eq:Assumptions_Q_R}, if there exists a matrix $P \succ 0$ and a diagonal matrix $\Lambda \succ 0$ such that
\begin{multline}
\scalemath{0.93}{
    \begin{bmatrix}
    -A^\top P  -PA &  -C_1^\top\Lambda - PB_1  & -PB_2 + C_{2}^\top S^\top \\
    *  &  W  &    - \Lambda D_{12}+D_{21}^\top S^\top\\
    *  &  *  &  \mathcal{R} 
    \end{bmatrix}  } 
    \\
    \scalemath{0.93}{+
    \begin{bmatrix}
    C_2^\top \\ D_{21}^\top \\ D_{22}^\top
    \end{bmatrix}Q
    \begin{bmatrix}
    C_2^\top \\ D_{21}^\top \\ D_{22}^\top
    \end{bmatrix}^\top 
  \succ 0 \,,} \label{ineq:LMI_NODEREN_robustness}
\end{multline}
with $W$ given by \eqref{eq:definition_W} and
\begin{equation}\label{eq:def_matrix_U_Y_Z}
    \mathcal{R} = R + S D_{22} + D_{22}^\top S^\top \,.
\end{equation}
\end{theorem}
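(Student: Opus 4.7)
The plan is to follow the standard dissipativity recipe: propose a quadratic storage function, show its time derivative is pointwise dominated by the supply rate $s_\Delta$, and integrate from $t_0$ to $t_1$ to recover \eqref{eq:definition_IQC_continuous_time}. Since $P \succ 0$, the natural candidate is $\mathcal{S}(\Delta x) = \Delta x^\top P \Delta x$, which maps $\real{n} \to \real{+}_0$ as required by \Cref{def:Incremental_IQC}.

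First I would differentiate $\mathcal{S}$ along the incremental dynamics \eqref{eq:NODEREN_INCREMENTAL_linear_part}, obtaining
\begin{equation*}
\dot{\mathcal{S}} = \Delta x^\top (A^\top P + PA)\,\Delta x + 2\Delta x^\top PB_1 \Delta w + 2\Delta x^\top PB_2 \Delta u .
\end{equation*}
Then, introducing the extended vector $\xi = [\Delta x^\top,\, \Delta w^\top,\, \Delta u^\top]^\top$, I would rewrite $\dot{\mathcal{S}}$, the slope-restriction quadratic $\Gamma$ from \eqref{eq:Gamma_conic_combination}, and the supply rate $s_\Delta$ as quadratic forms $\xi^\top(\cdot)\xi$ in the same basis. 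Substituting $\Delta v = C_1\Delta x + D_{11}\Delta w + D_{12}\Delta u$ into $\Gamma$ and $\Delta y = C_2\Delta x + D_{21}\Delta w + D_{22}\Delta u$ into $s_\Delta$ makes every block an explicit function of the NodeREN parameters. After grouping, the $(2,2)$ block contributed by $-\Gamma$ is exactly $W$ from \eqref{eq:definition_W}, the $(3,3)$ block arising from $-s_\Delta$ contains $\mathcal{R}$ from \eqref{eq:def_matrix_U_Y_Z}, and the quadratic $-\Delta y^\top Q \Delta y$ contributes the outer-product term appearing in \eqref{ineq:LMI_NODEREN_robustness}.

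Combining these pieces, I would show that $\dot{\mathcal{S}} + \Gamma - s_\Delta = -\xi^\top \Pi\, \xi$, where $\Pi$ is exactly the matrix on the left-hand side of \eqref{ineq:LMI_NODEREN_robustness}. Hence the hypothesis $\Pi \succ 0$ yields $\dot{\mathcal{S}} + \Gamma \le s_\Delta$ for every $\xi$; combining this with $\Gamma(t) \ge 0$ pointwise in time (a consequence of \Cref{assumption:rate_limited_sigma} via \eqref{eq:Gamma_conic_combination}) produces the pointwise dissipation inequality $\dot{\mathcal{S}}(\Delta x(t)) \le s_\Delta(\Delta u(t), \Delta y(t))$. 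Integrating this inequality over $[t_0,t_1]$ recovers \eqref{eq:definition_IQC_continuous_time} and completes the argument.

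The main obstacle I anticipate is purely algebraic bookkeeping: one must carefully symmetrize every cross term when assembling the $3\times 3$ block quadratic form — in particular the off-diagonal contributions $2\Delta u^\top S C_2\Delta x$ and $2\Delta u^\top S D_{21}\Delta w$ arising from $2\Delta u^\top S\Delta y$, the diagonal piece $SD_{22}+D_{22}^\top S^\top$ that combines with $R$ to form $\mathcal{R}$, and the sign conventions produced by moving $s_\Delta$ to the left-hand side — so that the resulting coefficient matrix matches \eqref{ineq:LMI_NODEREN_robustness} exactly. No new analytical tool beyond an \textsf{S}-procedure-style use of $\Gamma \ge 0$ is needed: the argument is effectively the contractivity proof sketched for \Cref{theorem:NODEREN_LMI_contractivity} extended to account for the input/output channel through the $(Q,S,R)$ triple.
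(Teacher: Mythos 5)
Your proposal is correct and follows essentially the same route as the paper's proof: take the quadratic storage function $V(\Delta x)=\Delta x^\top P\Delta x$, left- and right-multiply the LMI \eqref{ineq:LMI_NODEREN_robustness} by $[\Delta x^\top\ \Delta w^\top\ \Delta u^\top]$, and use the slope-restriction inequality $\Gamma(t)\ge 0$ from \eqref{eq:Gamma_conic_combination} to obtain the pointwise dissipation inequality $\dot V\le s_\Delta$, which integrates to \eqref{eq:definition_IQC_continuous_time}. The algebraic bookkeeping you flag is exactly what the paper leaves implicit in its "by substitution" step, so there is no gap.
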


\preprintswitch{}{The proof is reported in~\Cref{appendix:proof_theorem_robustness}. }For the rest of the paper, we refer to NodeRENs that comply with \eqref{ineq:LMI_NODEREN_robustness}-\eqref{eq:def_matrix_U_Y_Z} as {{IQC-NodeRENs}}.
\begin{remark}
    Note that all {IQC-NodeRENs} are also contracting for any choice of $(Q,S,R)$ such that \eqref{eq:Assumptions_Q_R} holds. Indeed, if there exists $P\succ 0$ and a diagonal matrix $\Lambda \succ 0$ such that \eqref{ineq:LMI_NODEREN_robustness} holds, then \eqref{ineq:LMI_NODEREN_contractivity} also holds because $Q\preceq 0$ and the left-hand-side of \eqref{ineq:LMI_NODEREN_contractivity} is a principal minor of that of \eqref{ineq:LMI_NODEREN_robustness}. %
\end{remark}

It should be noted that solving the minimization problem \eqref{eq:minimization_problem} %
using the above parametrizations of
{C-NodeRENs} or {IQC-NodeRENs} requires solving a semidefinite program~(SDP) multiple times during the optimization process. This can be computationally intractable for DNNs with several parameters. In the following section, we propose a parametrization of a rich class of {C-NodeRENs} and {IQC-NodeRENs} that circumvents this problem and allows for unconstrained optimization. It is worth noting that this approach is similar to \cite{revay2023recurrent}, but the techniques differ due to the continuous-time nature of the proposed NodeRENs. %

\subsection{Direct Parametrization of NodeRENs}\label{subsec:direct_parametrization_NodeRENs}
Our goal is to define new free parameters $\theta_C \in \mathbb{R}^{n_C}$ and $\theta_{IQC} \in \mathbb{R}^{n_{IQC}}$ that can be mapped onto the parameters $\theta$ of the NodeREN model, given by equations \eqref{eq:NODEREN_linear_part}-\eqref{eq:NODEREN_nonlinear_part}, and such that either \eqref{ineq:LMI_NODEREN_contractivity} or \eqref{ineq:LMI_NODEREN_robustness} are verified. First, we focus on constructing C-NodeRENs from the parameters $\theta_C \in\mathbb{R}^{n_C}$ given by
\begin{equation}
    \label{eq:theta_C}
\theta_C= \{X, {B}_2, C_2, {D}_{12}, D_{21}, D_{22}, \Tilde{b}, U, Y_{1}, X_P\}\,,
\end{equation}
where $X \in\real{ (n+q) \times (n+q)}$, 
$B_2 \in\real{n\times m}$,
$C_2 \in\real{p\times n}$, 
$D_{12} \in\real{q \times m}$, 
$D_{21} \in\real{p \times q}$,
$D_{22} \in\real{p \times m}$,
$\tilde{b}\in\real{(n+q+p)}$
$U\in\real{n \times q}$,
$Y_1\in\real{n \times n}$ and 
$X_P\in\real{n \times n}$. 
\begin{theorem}%
\label{theorem:direct_parametrization_contracting}
    For any $\theta_C\in\real{n_{C}}$ defined in \eqref{eq:theta_C}, and $\epsilon,\epsilon_P>0$ the two following statements hold.
    \begin{enumerate}
        \item 
            {There are matrices $(Y,W,Z,P)$ of appropriate dimensions such that}
            \begin{gather}
                \begin{bmatrix}
                -Y^\top - Y  &  -U - Z \\
               * & W
            \end{bmatrix} = X^\top X + \epsilon I\,,\label{eq:theorem_contractivity_direct_parametrization}
            \\
            P = X_P^\top X_P + \epsilon_P I\,.  \label{eq:P_construction_pos_def}
            \end{gather}
        \item 
            {There are matrices
            $\{\tilde{A},\tilde{b}\}$ defined in terms of $\theta_C$ and the matrices $(Y,W,Z,P)$ defined in point $1$, such that the corresponding NodeREN  \eqref{eq:NODEREN_linear_part}-\eqref{eq:NODEREN_nonlinear_part} is contracting.}
    \end{enumerate}
    {Moreover, matrices $(Y,W,Z,P)$ and $\{\tilde{A},\tilde{b}\}$ can be computed as described in the proof of the Theorem.}
\end{theorem}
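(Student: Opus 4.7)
The plan is to construct the NodeREN matrices so that the contractivity LMI of \Cref{theorem:NODEREN_LMI_contractivity} is satisfied by construction, thereby invoking that theorem to conclude contractivity. I proceed in two stages matching the two claims, then verify the LMI directly.

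\textbf{Stage 1: Building $(Y,W,Z,P)$.} I form the positive definite matrix $H := X^\top X + \epsilon I$ and partition it as
\begin{equation*}
    H = \begin{bmatrix} H_{11} & H_{12} \\ H_{12}^\top & H_{22} \end{bmatrix}, \quad H_{11}\in\real{n\times n},\ H_{12}\in\real{n\times q},\ H_{22}\in\real{q\times q}.
\end{equation*}
I set $W := H_{22}$ and $Z := -U - H_{12}$, which yields $-U - Z = H_{12}$ by construction. To realize $-Y^\top - Y = H_{11}$, I exploit that only the symmetric part of $Y$ is constrained and define $Y := -\tfrac{1}{2} H_{11} + \tfrac{1}{2}(Y_1 - Y_1^\top)$, using $Y_1$ as the free skew-symmetric degree of freedom. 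Finally I set $P := X_P^\top X_P + \epsilon_P I$, which is positive definite since $\epsilon_P > 0$. This establishes point~1 of the theorem.

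\textbf{Stage 2: Extracting $(\Lambda,D_{11})$ from $W$ and defining $\tilde A$.} The crucial structural step is to split $W$ into $\Lambda\succ 0$ diagonal and $D_{11}$ strictly lower triangular with $W = 2\Lambda - \Lambda D_{11} - D_{11}^\top\Lambda$. Because $\Lambda$ is diagonal and $D_{11}$ strictly lower triangular, the diagonal of the right-hand side is $2\Lambda$, its strictly lower triangular part equals $-\Lambda D_{11}$, and its strictly upper part is then fixed by symmetry. Hence I set $\Lambda_{ii} = W_{ii}/2$ and $(D_{11})_{ij} = -W_{ij}/\Lambda_{ii}$ for $i>j$; positivity of $\Lambda$ follows from $W\succ 0$, since every diagonal entry of a positive definite matrix is strictly positive. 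I then define
\begin{equation*}
    A := P^{-1} Y, \qquad B_1 := P^{-1} Z, \qquad C_1 := \Lambda^{-1} U^\top,
\end{equation*}
and take the remaining blocks $B_2, C_2, D_{12}, D_{21}, D_{22}$ and the bias $\tilde b$ directly from $\theta_C$.

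\textbf{Verification and main obstacle.} A direct substitution using $P = P^\top$ and $\Lambda = \Lambda^\top$ then yields $-A^\top P - PA = -Y^\top - Y = H_{11}$, $-C_1^\top \Lambda - P B_1 = -U - Z = H_{12}$, and the $(2,2)$-block is $W = H_{22}$. Therefore the left-hand side of \eqref{ineq:LMI_NODEREN_contractivity} equals $H \succ 0$, and \Cref{theorem:NODEREN_LMI_contractivity} yields contractivity. The main obstacle I anticipate is the unique decomposition of $W$ into $(\Lambda, D_{11})$ in Stage~2: this step relies crucially on \Cref{assumption:D_11_lower_triangular}, since without strict lower triangularity of $D_{11}$ one cannot use the diagonal/triangular splitting of $W$ to disentangle $\Lambda$ from $D_{11}$, and a direct parametrization via $(X, X_P, U, Y_1)$ alone would no longer suffice.
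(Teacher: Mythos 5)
Your construction is correct and follows essentially the same route as the paper's proof: the same $H = X^\top X + \epsilon I$ partition, the same assignments $W = H_{22}$, $Z = -U - H_{12}$, $Y$ built from $-\tfrac{1}{2}H_{11}$ plus a free skew part from $Y_1$ (your sign convention on the skew term is immaterial), the same diagonal/strictly-lower-triangular splitting of $W$ to recover $(\Lambda, D_{11})$, and the same recovery $A = P^{-1}Y$, $B_1 = P^{-1}Z$, $C_1 = \Lambda^{-1}U^\top$ with the LMI left-hand side reducing to $H \succ 0$. No gaps.
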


\preprintswitch{}{The proof is reported in~\ref{appendix:proof_theorem_direct_contractivity}. }In conclusion, an unconstrained parametrization of a class of {C-NodeRENs} is obtained by first freely choosing $\theta_C$ defined in \eqref{eq:theta_C},
and then recovering the matrix $\Tilde{A}$ \preprintswitch{following the steps reported in the proof (see \cite{martinelli2023unconstrained}).}{as per \eqref{eq:_y_y_top_eq_H_11}-\eqref{eq:A_B1_C1_from_U_Y_Z} in~\Cref{appendix:proof_theorem_direct_contractivity}.}

Next, we focus on NodeRENs that satisfy IQC constraints by design, i.e. for any choice of parameters  $\theta_{IQC} \in\real{n_{IQC}}$ given by
\begin{equation}
    \label{eq:theta_R}
    \theta_{IQC}= \{X_R,B_2,C_2,D_{21},\tilde{b},X_3,T,U,Y_{1}, X_{P}\}\,,
\end{equation}
where $X_R \in\real{ (n+q) \times (n+q)}$, 
$B_2 \in\real{n\times m}$,
$C_2 \in\real{p \times n}$, 
$D_{21} \in\real{p\times q}$, 
$\tilde{b}\in\real{(n+p+q)}$, 
$X_3 \in\real{s\times s}$, 
$T\in\real{q \times m}$, 
$U\in\real{n \times q}$,
$Y_1\in\real{n \times n}$ and 
$X_P\in\real{n \times n}$.
In the next theorem, we provide a procedure to construct an {IQC-NodeREN} for any choice of $\theta_{IQC}$.
{{\begin{theorem}\label{theorem:direct_parametrization_robust}
Let $(Q,S,R)$ be such that %
\eqref{eq:Assumptions_Q_R} holds.
Assume also that there exists $\delta>0$ satisfying $R-S(Q-\delta I)^{-1}S^{\top}\succ 0$. 
For any $\theta_{IQC}\in\real{n_{IQC}}$ defined in \eqref{eq:theta_R}, and $\epsilon,\epsilon_P>0$ the two following statements hold.
    \begin{enumerate}
        \item 
            There are matrices $(Y,W,Z,P,D_{22})$ of appropriate dimensions such that
            \begin{gather}
                \tilde{\mathcal{R}} = R + S D_{22} + D_{22}^\top S^\top + D_{22}^\top Q D_{22}\succ 0 \,,\label{eq:definition_R_cal}
                \\
                \begin{bmatrix}
                -Y^\top - Y  &  -U-Z \\
                *  &  W  
                \end{bmatrix}
                -
                \Psi
                =
                X_R^\top X_R + \epsilon I\,,\label{eq:theorem_robust_direct_parametrization}
            \end{gather}
            where $P$ is constructed as in \eqref{eq:P_construction_pos_def}, %
            and 
            \begin{equation}  
                \Psi
                =
                \begin{bmatrix}
                V \\ \Tilde{T}
                \end{bmatrix}
                \Tilde{\mathcal{R}}^{-1}
                \begin{bmatrix} * \end{bmatrix}^\top
                -
                \begin{bmatrix}
                C_{2}^\top \\ D_{21}^\top 
                \end{bmatrix}
                Q
                \begin{bmatrix} * \end{bmatrix}^\top\,,
                \label{eq:definition_Psi}
            \end{equation}
            \begin{gather}
                \Tilde{T} = - T  + D_{21}^\top S^\top + D_{21}^\top Q D_{22} \:, \label{eq:definition_T_tilde}\\
                V = -P B_{2} + C_{2}^\top S^\top + C_{2}^\top Q D_{22} \,.\label{eq:definition_V}
            \end{gather}
        \item
            There are matrices %
            $\{\tilde{A},\tilde{b}\}$
            defined in terms of $\theta_{IQC}$ and the matrices $(Y,Z,W,P,D_{22})$ defined in point~1, such that the corresponding {NodeREN} \eqref{eq:NODEREN_linear_part}-\eqref{eq:NODEREN_nonlinear_part} satisfies the incremental IQCs parametrized by $(Q,S,R)$. 
    \end{enumerate}
    {Moreover, matrices $(Y,W,Z,P)$ and $\{\tilde{A},\tilde{b}\}$ can be computed as described in the proof of the Theorem.}
\end{theorem}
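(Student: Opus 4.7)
The plan is to mirror the proof of \Cref{theorem:direct_parametrization_contracting} on a reduced LMI, with three adaptations: (i) use the free parameter $X_3$ together with the feasibility slack $\delta$ to construct $D_{22}$ so that the third diagonal block $\tilde{\mathcal{R}}$ of \eqref{ineq:LMI_NODEREN_robustness} is positive definite; (ii) identify $\Lambda D_{12}$ with $T$ so that the $(2,3)$ block of \eqref{ineq:LMI_NODEREN_robustness} matches $\tilde T$ from \eqref{eq:definition_T_tilde}; and (iii) eliminate the third block row and column via a Schur complement, reducing the IQC LMI to a contracting-style inequality that can be handled exactly as in \Cref{theorem:direct_parametrization_contracting}.

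\emph{Step 1 — construction of $D_{22}$.} Since $Q \preceq 0$ and $\delta > 0$, the matrix $M := \delta I - Q$ is positive definite and the hypothesis reads $J := R + S M^{-1} S^\top \succ 0$. Completing the square in \eqref{eq:definition_R_cal} shows that $\tilde{\mathcal{R}} \succ 0$ is equivalent to a strict contraction inequality $\|K\|<1$ on an invertible affine image $K$ of $D_{22}$ (with $J$ and $M$ providing the required weightings). I would parametrize the open unit ball in $K$ by the unconstrained $X_3$ through a Cayley-type map such as $K = X_3(I + X_3^\top X_3)^{-1/2}$, and invert to recover $D_{22}$; the degenerate case $Q \equiv 0$ reduces to a solvable Lyapunov-like linear equation. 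By construction, \eqref{eq:definition_R_cal} then holds for every $X_3$.

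\emph{Step 2 — Schur-complement reduction and parameter recovery.} Setting $D_{12} = \Lambda^{-1} T$, a Schur complement of \eqref{ineq:LMI_NODEREN_robustness} with respect to the $(3,3)$ block $\tilde{\mathcal{R}}$ yields the equivalent reduced inequality
\begin{equation*}
\begin{bmatrix} -A^\top P - PA & -C_1^\top \Lambda - PB_1 \\ * & W \end{bmatrix} - \Psi \;\succ\; 0,
\end{equation*}
with $\Psi$ exactly as in \eqref{eq:definition_Psi}. By \eqref{eq:theorem_robust_direct_parametrization} the left-hand side equals $X_R^\top X_R + \epsilon I \succ 0$, so \eqref{ineq:LMI_NODEREN_robustness} holds automatically and \Cref{theorem:NODEREN_LMI_robustness} applies. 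Writing $H := X_R^\top X_R + \epsilon I + \Psi$ and identifying $H = \begin{bmatrix} -Y^\top - Y & -U - Z \\ * & W \end{bmatrix}$, the remaining matrices are obtained as in \Cref{theorem:direct_parametrization_contracting}: $P = X_P^\top X_P + \epsilon_P I$; the diagonal $\Lambda$ and the strictly lower-triangular $D_{11}$ are read off from the entries of $W$ using \Cref{assumption:D_11_lower_triangular}; $B_1 = P^{-1} U$ and $C_1 = \Lambda^{-1} Z^\top$ come from the $(1,2)$ block; and $A = P^{-1} Y$, with $Y$ chosen so that its symmetric part equals $-H_{(1,1)}/2$ and its skew part is $Y_1 - Y_1^\top$. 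The matrices $B_2$, $C_2$, $D_{21}$ and the bias $\tilde b$ are inherited unchanged from $\theta_{IQC}$.

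\emph{Main obstacle.} The hard part will be Step~1: exhibiting an everywhere-defined, surjective map $X_3 \mapsto D_{22}$ onto (an appropriate subset of) $\{D_{22} : \tilde{\mathcal{R}} \succ 0\}$ using only the slack $\delta$ supplied by the hypothesis, and verifying that the inversion from $K$ back to $D_{22}$ is well-posed for every $X_3$ (in particular, that the weighting matrices remain invertible). Once this is settled, Step~2 is a direct transcription of the argument of \Cref{theorem:direct_parametrization_contracting} to the Schur-reduced IQC LMI.
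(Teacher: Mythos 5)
Your overall route is the same as the paper's: first construct $D_{22}$ from $X_3$ via an unconstrained parametrization of a norm-contractive matrix using the $\delta$-shifted weight $\delta I - Q$, then set $H = X_R^\top X_R + \epsilon I + \Psi$ and recover $(Y,W,Z,P)$ and the matrices of $\tilde A$ exactly as in \Cref{theorem:direct_parametrization_contracting}, with the Schur complement of \eqref{ineq:LMI_NODEREN_robustness} with respect to its $(3,3)$ block providing the bridge between \eqref{eq:theorem_robust_direct_parametrization} and \Cref{theorem:NODEREN_LMI_robustness}. Your Step~2 is a correct transcription of that argument (the swap of the roles of $U$ and $Z$ in assigning $B_1$ and $C_1$ is immaterial, since both choices reproduce the $(1,2)$ block $-U-Z$). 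The one sub-step where you genuinely diverge is the unit-ball map: you propose $K = X_3(I+X_3^\top X_3)^{-1/2}$, whereas the paper uses the Cayley transform $\tilde F = [(I-M)(I+M)^{-1}]_{p\times m}$ with $M = X_3^\top X_3 + \epsilon I$ and sets $D_{22} = -\mathcal{Q}^{-1}S^\top + L_Q^{-1}\tilde F L_R$ with $\mathcal{Q}=Q-\delta I$, $L_Q^\top L_Q = -\mathcal{Q}$, $L_R^\top L_R = R - S\mathcal{Q}^{-1}S^\top$. Your map gives $I - K^\top K = (I+X_3^\top X_3)^{-1}\succ 0$ directly and so avoids the paper's \Cref{lemma:cayley_transform}, at the cost of a matrix square root; both are everywhere-defined on the unconstrained parameter space, so the ``main obstacle'' you flag is not actually an obstacle.

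Three points need tightening. First, $\tilde{\mathcal{R}}\succ 0$ is \emph{not} equivalent to a contraction condition obtained by completing the square, because $Q$ may be singular; the correct statement is one-sided: writing $\Xi = R + SD_{22}+D_{22}^\top S^\top + D_{22}^\top\mathcal{Q}D_{22}$ one has $\tilde{\mathcal{R}} = \Xi + \delta D_{22}^\top D_{22} \succeq \Xi$, and it is $\Xi\succ 0$ that is equivalent to the weighted contraction; the implication is all you need, and it also disposes of your separate ``degenerate case $Q\equiv 0$'' uniformly. Second, $X_3\in\real{s\times s}$ with $s=\max(p,m)$ while $D_{22}\in\real{p\times m}$, so your $K$ must be truncated to a $p\times m$ block and you must check that strict contractivity survives truncation (this is the second half of the paper's \Cref{lemma:cayley_transform}). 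Third, and most importantly, you never verify that $W = H_{22}\succ 0$: this is needed for $\Lambda = \tfrac12 W_{diag}$ to be invertible, without which $D_{11}$, $C_1$ and $D_{12}=\Lambda^{-1}T$ are undefined and \Cref{theorem:NODEREN_LMI_robustness} cannot be invoked. The missing one-line argument is that $\Psi\succeq 0$ (its first term is positive semidefinite because $\tilde{\mathcal{R}}\succ0$, its second because $Q\preceq 0$), hence $H = X_R^\top X_R+\epsilon I+\Psi\succ 0$ and in particular $W\succ0$ and $\Lambda\succ0$. With these repairs your proposal matches the paper's proof.
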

}}
The proof of Theorem~\ref{theorem:direct_parametrization_robust} 
{shares similarities with the one of Proposition~2}
in \cite{revay2023recurrent} and it is reported in~\preprintswitch{\cite{martinelli2023unconstrained}}{\Cref{appendix:proof_of_direct_parametrization_robust}}.
Note that the assumption that there exists a $\delta$ such that $R-S(Q-\delta I)^{-1}S^{\top}\succ 0$ is not restrictive. Indeed, for all the most relevant cases, reported in \Cref{tab:choice_Q_S_R}, finding an appropriate value of $\delta$ is straightforward.
In conclusion, an unconstrained parametrization of {IQC-NodeRENs} is obtained by first freely choosing the parameters $\theta_{IQC}\in\real{n_{IQC}}$, then constructing \preprintswitch{$D_{22}$, $\Tilde{A}$ following the steps reported in the full proof.}{$D_{22}$ as per \eqref{eq:NODEREN_Last_D22_Robust} in Appendix~\ref{appendix:proof_of_direct_parametrization_robust}, and last recovering the matrices of $\tilde{A}$ using \eqref{eq:_y_y_top_eq_H_11}-\eqref{eq:A_B1_C1_from_U_Y_Z}, \eqref{eq:NODEREN_Last_D22_Robust} and \eqref{eq:T__Lambda_D_12}.} 
\section{Simulations \& Results}
\label{sec:Simulations_and_results}
In this section, we showcase the application of NodeRENs for identifying stable nonlinear systems. We begin by assessing the performance of NodeRENs when using various integration methods during training. Next, we highlight the advantages of contractivity-by-design by training a general non-contractive NodeREN \eqref{eq:NODEREN_INCREMENTAL_linear_part}-\eqref{eq:NODEREN_INCREMENTAL_nonlinear_part}, which fails to learn stable behavior for the same task. Finally, we demonstrate the ability of NodeRENs to learn from irregularly sampled time-series data while maintaining the guarantees by design. Our code is accessible at \url{https://github.com/DecodEPFL/NodeREN}.

We remark that NodeREN trajectories are time-reversible, as established in Lemma~\ref{lemma:existence_uniqueness}, which implies that they cannot intersect for distinct initial conditions. To enhance model expressiveness, we employ augmented state vectors, initializing additional scalar states to zero. This technique, known as feature augmentation, was introduced in \cite{dupont2019augmented} for general Neural ODEs.

\subsection{Continuous-Time System Identification}
Here, we consider the system identification of a black-box system.
For this experiment, we assume that the unknown system dynamics are those of a nonlinear pendulum
\begin{equation}
\label{eq:pendulum}
     \ell \Ddot{\alpha}(t) + \beta \Dot{\alpha}(t) + g \sin{\alpha(t)} =  0 \;,
\end{equation}
where $\alpha(t)$ is the angle position of the pendulum at time $t$ with respect to the vertical axis, $\beta>0$ is the viscous damping coefficient, $g$ is the gravitational acceleration and $\ell>0$ is the length of the pendulum. \preprintswitch{}{For our experiments, we have chosen $\ell = \SI{0.5}{\meter}$ and $\beta = \SI{1.5}{\meter\per\second}$.} 

{We consider the scenario where the system dynamics \eqref{eq:pendulum} to identify are completely unknown and the only prior knowledge we have is that the system is stable around the origin.}
{Hence, we choose to train a C-NodeREN, which is guaranteed to be contracting (and so, stable around the origin if $\Tilde{b}$ is null) 
for any choice of the trainable parameters $\theta_{C}$ --- i.e. 
even if we stop the optimization prematurely.}

The training data are given by noisy measurements of $\alpha(t)$ and $\dot{\alpha}(t)$ across $N \in \mathbb{N}$ different experiments performed in the time interval $[0,T_{end}]$. For each experiment\preprintswitch{}{~$i=1,\ldots,N$}, the system \eqref{eq:pendulum} starts from a different and known initial condition $\alpha_i(0)$ and $\dot{\alpha}_i(0)$. Due to the possibility of random time delays during data acquisition, we assume that trajectory measurements are taken at random time instants across $[0,T_{end}]$ for each experiment. 
\preprintswitch{We obtain the parameters $\theta_C$ minimizing the mean squared error between the C-NodeREN predictions and the simulated measurements. For more details, please refer to~\cite{martinelli2023unconstrained}.}{It is worth noting that inconsistent measurement times are common in control applications; however, incorporating this feature into discrete-time systems would pose a significant modeling challenge. 

Precisely,  the training data consist of $\mathcal{Z} = \{\mathcal{Z}_i\}_{i=1}^N$, where
\begin{equation}
\label{eq:dataset}
    \mathcal{Z}_i = \left\{(t_j,z_j), \text{ for all }t_j \in \mathcal{T}_i\subset[0,T_{end}]\subset\mathbb{R}\right\}\,,
\end{equation}
where $T_{end} = \SI{3}{\second}$, $\mathcal{T}_i$ contains $n_i\in \mathbb{N}$ time instants, and for each experiment $i=1,\ldots,N$, the vector $z_j \in \mathbb{R}^2$ is measured as $z_j= \begin{bmatrix}\alpha(t_j)&\dot{\alpha}(t_j)\end{bmatrix}^\top+w(t_j)$ with $w(t_j) \sim \mathcal{N}(0,0.01 I_2)$ being drawn according to a Gaussian distribution. We minimize over $\theta_C$ the loss function
\begin{equation*}
 L(\theta_C,\mathcal{Z})=\frac{1}{N} \sum_{i=1}^{N} 
 \frac{1}{n_i}
 \sum_{(t_j,z_j) \in \mathcal{Z}_i}
 |y(t_j)-z_j|^2 \,,
\end{equation*}
where, with slight abuse of notation, $y(t_j) \in \mathbb{R}^2$ denotes the output at time $t_j$  of the C-NodeREN with parameters $\theta_C$.

The chosen {C-NodeREN architecture {has $n= 4$ and $q=5$. It} comes with $\theta_C \in \mathbb{R}^{\sim 150 }$,
and it is trained using one of the following integration methods}: forward Euler (\texttt{euler}), Runge-Kutta of order 4 (\texttt{rk4}), and Dormand-Prince-Shampine of order 5 (\texttt{dopri5}).
The training is performed with $N =200$ experiments and using Adam~\cite{ruder2017overviewGradientDescent}.

{To test the prediction performance of our trained models, we construct a test dataset $\mathcal{Z}_{test}$ in the same way as \eqref{eq:dataset}, using a longer time window $[0,T_{end}^{test}]$ with $T_{end}^{test} = \SI{8}{\second}$ and new sets of time instants $\mathcal{T}_i^{test}$. We then compute the test loss $L(\theta_C^\star,\mathcal{Z}_{test})$, where $\theta_C^\star$ are the parameters where the training converged to.} %
}
In \Cref{fig:Plot_comparison_Integration_Methods}, we report the resulting test losses 
for different methods, along with their Number of Function Evaluations~(NFE) used for the prediction\preprintswitch{.}{ in the time window $[0,T_{end}^{test}].$
Both \texttt{euler} and \texttt{rk4} have been applied using $\{100, 200\}$ and $\{50, 200\}$ integration steps, respectively, in $[0,T_{end}^{test}]$. One of the advantages of variable-steps methods such as \texttt{dopri5} is the possibility to choose a tolerance value \emph{tol} to fix the computational error during the ODE simulation. Indeed, it is possible to obtain a trade-off between accuracy and the required number of function evaluations just by properly setting this parameter.}  
It is important to emphasize how for fixed-step methods (e.g., \texttt{euler}, \texttt{rk4}), the chosen number of steps may be insufficient to fully capture the dynamics of the trained model, which can result in numerical instability. In contrast, adaptive methods (e.g., \texttt{dopri5}) dynamically adjust the step size to ensure bounded integration errors, hence promoting numerical stability in the trajectory simulations. Such freedom enables us to either control the NFEs with a fixed step approach, or to achieve more accurate simulations with an adaptive step size method. As anticipated in Remark~\ref{re:contractive_integration}, it would be interesting to endow C-NodeRENs with specific integration methods that preserve contractivity, e.g.~\cite{manchester2017contracting}.

Then, for the same identification task, we have trained a general NodeREN  \eqref{eq:NODEREN_INCREMENTAL_linear_part}-\eqref{eq:NODEREN_INCREMENTAL_nonlinear_part} (that we denote as G-NodeREN), with trainable parameters $\theta = \{\tilde{A},\tilde{b}\}$. 
Given the higher flexibility of G-NodeRENs with respect to C-NodeRENs,  one could expect a better performance in the time window $[0,T_{end}]$ considered for the training. However, a trained G-NodeREN can exhibit unstable dynamics when simulating for longer horizons; this phenomenon is illustrated in  \Cref{fig:Plot_confidence_intervals}. 
Instead, C-NodeRENs only limit the search to stable models, resulting in better identification performance for the considered example. In \Cref{fig:Plot_confidence_intervals}, we also display trajectory tubes from different perturbed initial conditions. Note that the diameter of the tube associated with the C-NodeREN decreases over time, showcasing its contractivity.

\begin{figure}[t]
    \centering
    \includegraphics[width=0.33\textwidth]{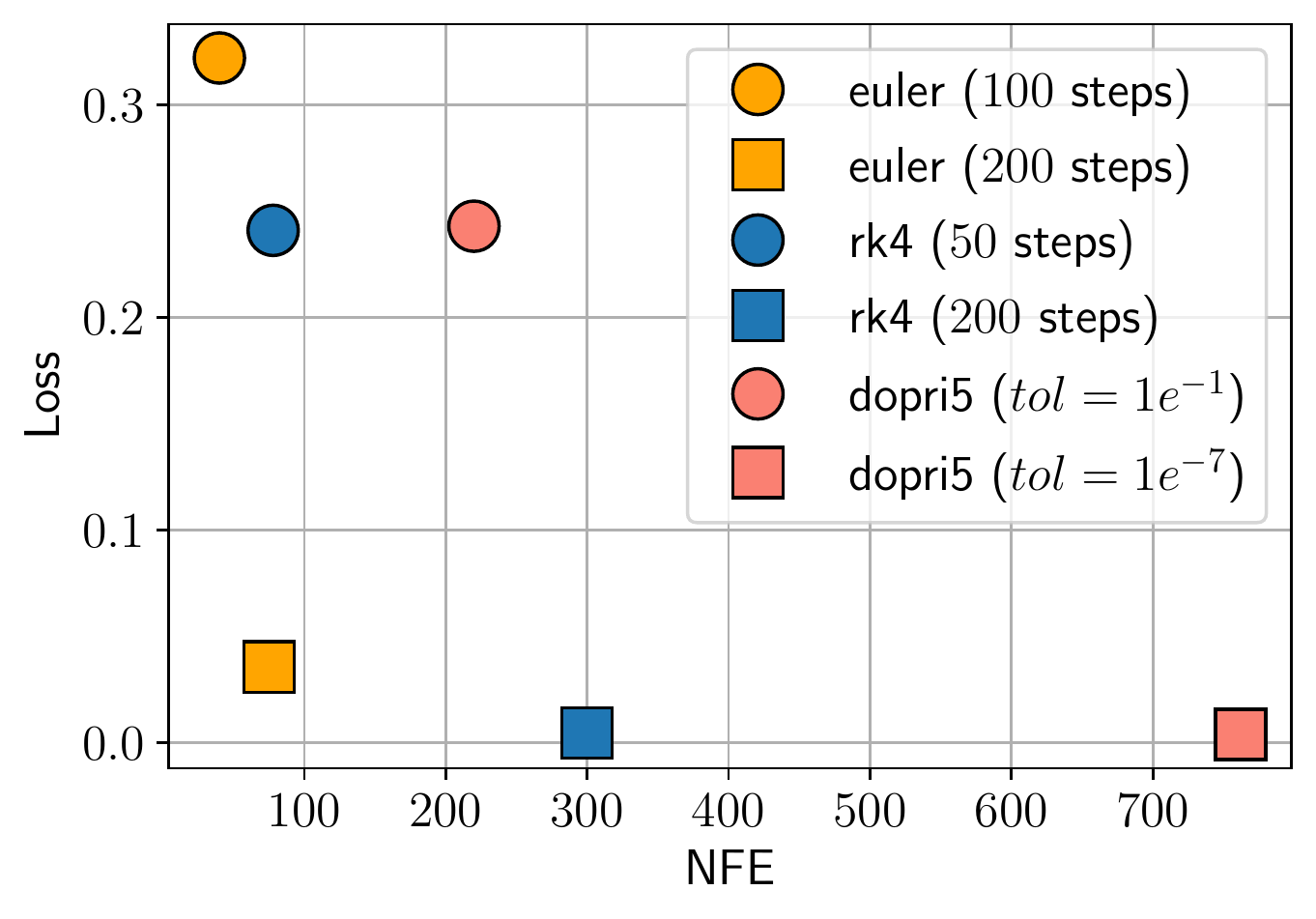}
    \caption{Comparison between the number of function evaluations~(NFE) and loss on testing dataset for three integration methods\preprintswitch{: forward Euler (\texttt{euler}), Runge-Kutta~IV (\texttt{rk4}), and Dormand-Prince-Shampine (\texttt{dopri5}).}{.}}
    \label{fig:Plot_comparison_Integration_Methods}
\end{figure}

\subsection{Irregularly sampled-data}\label{subsec:irregularly_sampled_data}
To show the robustness of C-NodeRENs concerning the irregular sample data, we compare the test losses {of $10$ C-NodeRENs} trained on $10$ different training datasets {having the same initial conditions, but different sampling times.} %
We observe that, despite using differently {sampled} data, all the models have similar test losses: they all lay in the interval $[3.7  \times 10^{-4} ; 9.1 \times 10^{-3}]$, depending on how well the samples were distributed.
\begin{figure}[t]
    \centering
    \includegraphics[width=0.35\textwidth]{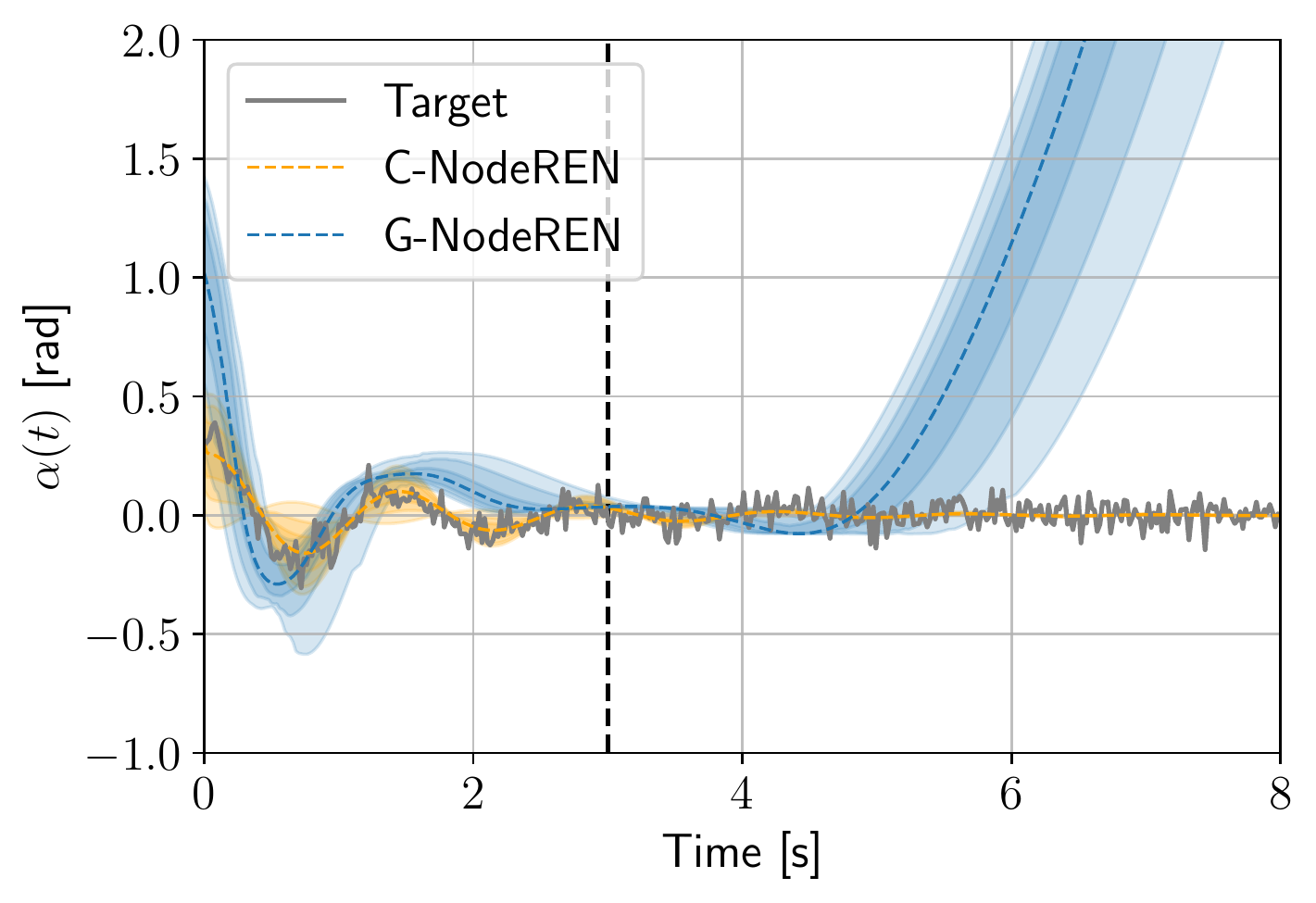}
    \caption{{Noisy test trajectory of the angular position $\alpha(t)$ of the real system (in gray). Predicted trajectories of the trained C-NodeREN (in orange) and G-NodeREN (in blue). Shaded areas represent tubes of additional trajectories resulting from perturbations on the initial state of the systems. The black dashed line indicates the end of the training horizon\preprintswitch{}{ $T_{end}=\SI{3}{\second}$}.}
    }
    \label{fig:Plot_confidence_intervals}
\end{figure}

\section{Conclusions}
\label{sec:Conclusions} 
In this work, we have established a class of Neural ODEs that generalize RENs for continuous-time scenarios. Specifically, the proposed models guarantee relevant continuous-time system-theoretic properties such as contractivity and dissipativity by design. The resulting DNNs can be trained using unconstrained optimization — which enables the use of large networks — and the resulting architectures are flexible in the choice of an ODE integration scheme that is suitable for the learning problem at hand. We have showcased the performance of NodeRENs through the task of identifying the behavior of a nonlinear pendulum, even with irregularly sampled data. 

Future research should focus on integration schemes preserving contractivity and IQC properties, exploring distributed NodeREN architectures, and analyzing their generalization capabilities.
\bibliography{bibliography}
\preprintswitch{}{
\appendix
\subsection{Proof of Lemma \ref{lemma:existence_uniqueness}}\label{appendix:proof_f_is_unique}
Given $u(t) \in PC([0,\infty],\real{m})$, we can rewrite \eqref{eq:NODEREN_linear_part} as: 
\begin{equation}\label{eq:definition_function_p}
    \dot{x}(t) = p(x(t),t) = A x(t) + B_1 w(t)+ B_2 u(t) + b_x, 
\end{equation}
where we define $p(x(t), t) = f(x(t), u(t), t)$. 
It is well-known that a solution exists and is unique if $p(\cdot,\cdot)$ is globally Lipschitz in its first argument and piece-wise continuous in its second (e.g., \cite[Theorem 3.6]{lygeros2010lecture}). Therefore, we prove that these properties hold for the model in \eqref{eq:NODEREN_linear_part} under \Cref{assumption:rate_limited_sigma,assumption:D_11_lower_triangular}.
{First, $p(\cdot,\cdot)$ is piece-wise continuous in its second argument because it is the composition of piece-wise continuous functions under \Cref{assumption:rate_limited_sigma}.}
Then, we prove that $p(\cdot,\cdot)$ is globally Lipschitz in its first variable, that is
\begin{multline*}
    \exists \kappa>0 : 
    |p(x_1,t) - p(x_2,t) | \le \kappa |x_1-x_2| \,, 
\end{multline*}
for every $x_1,x_2 \in \real{n}$ and $t \in \mathbb{R}$.
    For $j\in\{1,2\}$, denote $w_j \in \mathbb{R}^q$ as 
    {
    $w_j = \left[w_j^1,\ldots,w_j^q\right]^{\top}=\sigma(v_j)=\left[\sigma\left(v_j^1\right),\ldots,\sigma\left(v_j^q\right)\right]^{\top}$%
    },
    where $v_j=C_1x_j+D_{11}w_j+D_{12}\hat{u}$ for any possible value $\hat{u}$ of the function $u(t)$ in time.
    {By using the triangular inequality, we obtain
    \begin{align}
        |p(x_1,t) - p(x_2,&t) | =
        | A (x_1 - x_2) + B_1 (w_1 - w_2)|\nonumber\\
        &\leq | A | \,| x_1 - x_2| + | B_1| \, |w_1 - w_2|\,.
        \label{eq:only_step_global_lipschitz}
    \end{align}}
    Using \Cref{assumption:rate_limited_sigma} and \Cref{assumption:D_11_lower_triangular} we verify that $|w_1^1 - w_2^1|  \le 
        |C_1^{1:}| |x_1 - x_2 |$, where $C_1^{i:}$ denotes the $i^{th}$ row of $C_1$, and we define $\kappa_1 = |C_1^{1:}|$. 
    {Proceeding similarly for each entry of $w_1-w_2$ we derive that, for every $r=2,3,\ldots,q$, it holds that $|w_1^r-w_2^r|\leq \kappa_r|x_1-x_2|$, with Lipschitz constants $\kappa_r = | C_1^{r:} | + \sum_{\ell=1}^{r-1}|D_{11}^{r,\ell}| \kappa_\ell$, where $D_{11}^{{i},j}$ denotes the element in the ${i}^{th}$ row and $j^{th}$ column of $D_{11}$.}
    Finally, from \eqref{eq:only_step_global_lipschitz} we conclude $|p(x_1,t) - p(x_2,t) | \le \kappa_{tot} |x_1 - x_2 |$, where $\kappa_{tot} = |A| + \sum_{i = 1}^{q} |B_1^{i:}| \kappa_i$.
    Thus, the differential equation $\dot{x}(t) = f(x{(t)},u(t),t) = p(x{(t)},t)$ is globally Lipschitz in $x$ with constant $\kappa_{tot} \in \real{+}_0$ and its solution exists and it is unique.%

\subsection{Proof of Theorem~\ref{theorem:NODEREN_LMI_contractivity}}\label{appendix:proof_theorem_contractivity}
Define the function
\begin{equation}\label{eq:definition_V_delta}
    V(\Delta x(t)) = \Delta x(t)^\top P \Delta x(t) \,.
    \end{equation}
Let
\begin{equation*}
    \Phi = (-A^\top P  - PA) - (C_1^\top\Lambda + PB_1) W^{-1} (*)^\top \succ 0,
\end{equation*}
be the Schur complement of \eqref{ineq:LMI_NODEREN_contractivity}. 
Then, using properties of matrix inequalities, it is always possible to find an $\omega \in (0,{\lambda_{min}(\Phi)}\:{\lambda_{max}(P)}^{-1}]$
such that 
\(
    (\Phi - \omega P) \succ 0.
\)
Hence,
\begin{equation}
    \begin{bmatrix}
    \label{ineq:LMI_NODEREN_contractivity_with_alpha}
    -A^\top P  - PA -\omega P &  -C_1^\top\Lambda - PB_1 \\
        *  & W
    \end{bmatrix}
    \succ 0.
\end{equation}
By left- and right-multiplying the inequality \eqref{ineq:LMI_NODEREN_contractivity_with_alpha} with
\(
\begin{bsmallmatrix}
\Delta x(t)^\top & \Delta w(t)^\top
\end{bsmallmatrix}
\)
 and
\(
\begin{bsmallmatrix}
\Delta x(t)^\top & \Delta w(t)^\top
\end{bsmallmatrix}^\top
\), respectively, and using \eqref{eq:Gamma_conic_combination}, we obtain by substitution that
\begin{equation*}
    \Dot{V}(\Delta x(t)) \: +\omega V(\Delta x(t))< - \Gamma(t) \le 0 \;.
\end{equation*}
Thus, according to Lyapunov's exponential stability theorem~\cite{Khalil:1173048}, the incremental system is globally exponentially stable, that is, it satisfies \eqref{eq:definition_contracting_system} with $c = \nicefrac{\omega}{2}$.

\subsection{Proof of Theorem~\ref{theorem:NODEREN_LMI_robustness}}\label{appendix:proof_theorem_robustness}
From \Cref{def:Incremental_IQC}, if the function $\mathcal{S}(\Delta x(t))$ is continuous and differentiable, \eqref{eq:definition_IQC_continuous_time} can be rewritten as:
\begin{equation}\label{eq:differentiated_dissipation_inequality}
     \frac{d}{dt}\Bigl(\mathcal{S}(\Delta x(t)) \Bigl) \: \le \: s_\Delta\bigl( \Delta u(t), \Delta y(t)\bigl) \;.
\end{equation}
Consider as $\mathcal{S}(\Delta x(t))$ the function $V(\Delta x(t))$ as in \eqref{eq:definition_V_delta}, which is continuous and differentiable in $\Delta x(t)$. By left- and right-multiplying the inequality \eqref{ineq:LMI_NODEREN_robustness} with
\(
\begin{bsmallmatrix}
\Delta x(t)^\top & \Delta w(t)^\top & \Delta u(t)^\top
\end{bsmallmatrix}
\)
 and
\(
\begin{bsmallmatrix}
\Delta x(t)^\top & \Delta w(t)^\top & \Delta u(t)^\top
\end{bsmallmatrix}^\top
\) 
respectively, we obtain by substitution 
\begin{equation*}
    \Dot{V}(\Delta x(t)) -
    \begin{bmatrix}
    \Delta y(t) \\ \Delta u(t) 
    \end{bmatrix}^\top
    \begin{bmatrix}
    Q  &  S^\top \\
    S  &  R
    \end{bmatrix}
    \begin{bmatrix}
    *
    \end{bmatrix}
    <
    - \Gamma(t)
    \le 0\,.
\end{equation*}
Thus, the inequality \eqref{eq:differentiated_dissipation_inequality} holds.

\subsection{Proof of Theorem~\ref{theorem:direct_parametrization_contracting}}\label{appendix:proof_theorem_direct_contractivity}
\emph{Part 1)}: 
    Define
\begin{equation}\label{eq:definition_H_contractivity}
    H = 
    \begin{bmatrix}
    H_{11} & H_{12} \\
    H_{21} & H_{22}
\end{bmatrix}
=
X^\top X + \epsilon I \,,
\end{equation}
where $H_{11} \in \real{n \times n}$, $H_{12} \in \real{n \times q}$, $H_{21} \in \real{q \times n}$ and $H_{22} \in \real{q \times q}$. 
Then, one can set
\begin{equation}\label{eq:_y_y_top_eq_H_11}
    Y = -\frac{1}{2}(H_{11} + Y_{1} - Y_{1}^\top )\,,
\end{equation}
\begin{equation}\label{eq:W_from_H_22_Z_from_H_12}
    W =  H_{22}\,, \quad Z = - H_{12} - U\,.
\end{equation}
Thus, starting from $H$ as in \eqref{eq:definition_H_contractivity}, $(Y,W,Z)$ can be built using \eqref{eq:_y_y_top_eq_H_11}-\eqref{eq:W_from_H_22_Z_from_H_12} and $P$ using \eqref{eq:P_construction_pos_def}.

\emph{Part 2)}: %
First, note that $H\succ0$ and $P\succ 0$ are guaranteed by construction because $X^{\top}X+\epsilon I$ and $X_P^{\top}X_P+\epsilon_P I$ are positive definite for any $X$ and $X_P$.
From \Cref{assumption:D_11_lower_triangular}, ${D}_{11}$ is strictly lower triangular. 
Thus, using \eqref{eq:definition_W} and being $W = W^\top$ by construction, it is possible to split $W$ as: $W= W_{diag} + W_{low} + W_{low}^\top$, where $W_{diag}$ is a diagonal matrix, and $W_{low}$ is a strictly lower triangular matrix.
By setting
\begin{equation}\label{eq:Lambda_and_D11_from_W}
    \Lambda = \frac{1}{2} W_{diag} \,,\quad D_{11} = - \Lambda^{-1}W_{low}\,,
\end{equation}
and the matrices $A, B_1, C_1$ can be retrieved 
as
\begin{equation}\label{eq:A_B1_C1_from_U_Y_Z}
    C_1 = \Lambda^{-1} U^\top  \,,\quad A = P^{-1}Y \,,\quad B_1 = P^{-1}Z \,,
\end{equation}
where $P^{-1}$ and $\Lambda^{-1}$ exist since $P\succ0$ and $\Lambda\succ 0$.
Finally, we can show that \eqref{ineq:LMI_NODEREN_contractivity} holds by substituting $A$, $B_1$, $C_1$, $D_{11}$, $\Lambda$, and $P$ with their definitions in 
\eqref{eq:definition_H_contractivity}-\eqref{eq:A_B1_C1_from_U_Y_Z}. Thus, the corresponding {NodeREN} \eqref{eq:NODEREN_linear_part}-\eqref{eq:NODEREN_nonlinear_part} specified by $\{\Tilde{A},\Tilde{b}\}$
is contracting. 

\subsection{Proof of Theorem~\ref{theorem:direct_parametrization_robust}}
\label{appendix:proof_of_direct_parametrization_robust}

We first need a technical lemma.
\begin{lemma}\label{lemma:cayley_transform}
    Let $p, m \in \mathbb{N}$, and $s = \max (p,m)$.
    Let $M \in \real{s \times s}$, and $M = M^\top \succ 0$.
    Define
    \begin{equation}\label{eq:F_cayley}
        F = (I - M)(I+M)^{-1}\,,
    \end{equation}
    and 
    \(\Tilde{F} = [F]_{p \times m}\,.\)
    Then it follows that $I - \Tilde{F}^\top \Tilde{F} \succ 0$.
\end{lemma}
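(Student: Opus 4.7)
The plan is to first show that the full $s \times s$ Cayley transform $F$ satisfies $I_s - F^\top F \succ 0$, and then to deduce the claim for the $p \times m$ sub-block $\tilde F$ via a principal-submatrix extraction.

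First, I would verify that $F$ is well-defined and symmetric. Since $M = M^\top \succ 0$, the matrix $I + M$ is symmetric positive definite and thus invertible. Because $(I+M)(I-M) = I - M^2 = (I-M)(I+M)$, the two factors commute, and hence so do $I-M$ and $(I+M)^{-1}$. This gives $F = (I-M)(I+M)^{-1} = (I+M)^{-1}(I-M)$, from which $F^\top = F$ follows immediately. So $F^\top F = F^2$.

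Next, I would diagonalize $M = Q \Lambda Q^\top$ with $\Lambda = \mathrm{diag}(\lambda_1,\ldots,\lambda_s)$ and all $\lambda_i > 0$. A direct computation yields $F = Q\,\mathrm{diag}\!\bigl((1-\lambda_i)/(1+\lambda_i)\bigr) Q^\top$, and since $\lambda_i > 0$ each eigenvalue $(1-\lambda_i)/(1+\lambda_i)$ lies in $(-1,1)$. Thus the eigenvalues of $F^\top F = F^2$ all lie in $[0,1)$, establishing $I_s - F^\top F \succ 0$.

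To conclude, I would partition $F$ compatibly with the extraction $\tilde F = [F]_{p \times m}$, writing
\[
F = \begin{bmatrix} \tilde F & F_{12} \\ F_{21} & F_{22} \end{bmatrix},
\]
where $F_{21}$ has size $(s-p) \times m$ and $F_{12}$ has size $p \times (s-m)$. Because $s = \max(p,m)$, at least one of these off-diagonal blocks is empty, but the argument is the same regardless. A direct block multiplication shows that the leading $m \times m$ block of $F^\top F$ is $\tilde F^\top \tilde F + F_{21}^\top F_{21}$. Since $I_s - F^\top F \succ 0$, every leading principal submatrix is positive definite, so $I_m - \tilde F^\top \tilde F - F_{21}^\top F_{21} \succ 0$. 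Adding the positive semidefinite term $F_{21}^\top F_{21}$ then yields $I_m - \tilde F^\top \tilde F \succ 0$, which is the claim.

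The proof is essentially standard: the only care needed is in the block partitioning, and the use of $s = \max(p,m)$ makes the argument uniform across the cases $p \ge m$ and $m \ge p$. I do not anticipate any substantive obstacle.
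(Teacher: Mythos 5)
Your proof is correct, and while it follows the same two-stage skeleton as the paper (first establish $I - F^\top F \succ 0$ for the full Cayley transform, then pass to the sub-block), both stages are executed differently. For the first stage, the paper avoids diagonalization: it writes $I - F^2 = (I+F)^\top(I-F)$, shows via the identities $I+F = 2(I+M)^{-1}$ and $I-F = 2(I+M)^{-1}M$ that both factors are positive definite and commute, and concludes that their product is positive definite. Your spectral argument --- eigenvalues $(1-\lambda_i)/(1+\lambda_i) \in (-1,1)$, hence eigenvalues of $F^2$ in $[0,1)$ --- reaches the same conclusion more transparently and makes the strict inequality visibly tight only as $\lambda_i \to 0$ or $\lambda_i \to \infty$. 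For the second stage, the paper splits into the cases $p>m$ and $m>p$, handling the former by a direct row-block computation and the latter by a Schur complement of the partitioned $I - F^\top F$; your observation that the leading $m \times m$ principal submatrix of $I_s - F^\top F$ equals $I_m - \tilde F^\top \tilde F - F_{21}^\top F_{21}$, followed by adding back the positive semidefinite term $F_{21}^\top F_{21}$, unifies both cases in one step. Each approach buys something: the paper's matrix-identity route generalizes more readily to settings where a spectral decomposition is unavailable or undesirable, whereas yours is shorter, avoids the case analysis entirely, and makes the role of $s = \max(p,m)$ (one off-diagonal block being empty) explicit without needing it.
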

  \begin{proof}  
Notice that $F$ as defined in \eqref{eq:F_cayley} is the Cayley transformation of $M$.
First, we list several properties of the Cayley transformation.
\begin{enumerate}
    \renewcommand{\labelenumi}{{\theenumi}}
    \renewcommand{\theenumi}{P\arabic{enumi}}
    \item \label{prop1} $M = M^\top \Rightarrow F=F^{\top}$,
    \item \label{prop2} $I+F = 2(I+M)^{-1}$,
    \item \label{prop3} $I-F = 2(I+M)^{-1}M$.
    \setcounter{properties_cayley}{\value{enumi}}
\end{enumerate}
Moreover, note that the following properties hold for any symmetric matrices $M,N$:
\begin{enumerate}
    \setcounter{enumi}{\value{properties_cayley}}
    \renewcommand{\labelenumi}{{\theenumi}}
    \renewcommand{\theenumi}{P\arabic{enumi}}
    \item \label{prop4} If $M\succ0$, $N\succ0$ and $MN = NM$, then $MN\succ0$,
    \item \label{prop5} $(I+M)^{-1}M = M(I+M)^{-1}$.
\end{enumerate}
We are now ready to prove \Cref{lemma:cayley_transform} based on these properties.

Let $M$ be such that $M=M^\top \succ 0$. First, we prove that  
\begin{equation}
\label{eq:cayley_part1}
    I-F^{\top}F\succ 0\,.
\end{equation}
 This is equivalent to proving 
\begin{equation}\label{eq:cayley_I+F_I-F}
(I+F)^{\top} (I-F) \succ 0\,,
\end{equation}
since $F = F^{\top}$. Using \ref{prop4}, we prove \eqref{eq:cayley_I+F_I-F} in three steps.
\begin{enumerate}
    \item From property~\ref{prop2}, $(I+F)\succ0$ since $2(I+M)^{-1}\succ 0$ {thanks to $M\succ 0$}.
    \item From property~\ref{prop3}, $(I-F)\succ0$ since $2(I+M)^{-1}M\succ 0$ from properties~\ref{prop4} and \ref{prop5}.
    \item We prove that $(I+F)^\top$ and $(I-F)$ are commutative:
    \begin{align*}
        (I+F)^{\top} (I-F) &= I-F + F^{\top} + F^{\top}F \\
        &= I+ F^{\top} - F + FF^{\top} \\
        &= (I-F) (I+F)^{\top}
    \end{align*}
    where the second equality is valid from \ref{prop1}.
\end{enumerate}
Then, \eqref{eq:cayley_I+F_I-F} holds and this proves that $I-F^{\top}F\succ0$. 

Last, we prove that \eqref{eq:cayley_part1} implies $I-\tilde{F}^\top\tilde{F}\succ 0$. Note that if $p=m$ then $\tilde{F}=F$ and this fact is trivial. We consider the two remaining cases $p>m$ and $m>p$ separately. 

Assume that $p>m$. Then, $F= \begin{bmatrix}\tilde{F} \\ F_b \end{bmatrix}$.
Hence, $I-F^{\top}F = I-\tilde{F}^{\top}\tilde{F} - F_b^{\top}F_b \succ 0$, which implies that $I-\tilde{F}^{\top}\tilde{F} \succ F_b^{\top}F_b \succeq 0$.

Assume that $m>p$. Then, $F= \begin{bmatrix}\tilde{F} & F_r \end{bmatrix}$. Hence,
\begin{equation*}
    I-F^{\top}F 
    =
    \begin{bmatrix}
        I-\tilde{F}^{\top}\tilde{F} & -\tilde{F}^{\top}F_r \\
        -F_r^{\top}\tilde{F} & I-F_r^{\top}F_r
    \end{bmatrix}
    \succ 
    0\,,
\end{equation*}
and by applying the Schur complement, the above equation implies $I-\tilde{F}^{\top}\tilde{F}\succ 0$.
\end{proof}

{Note that the property \eqref{eq:F_cayley} has also been used in~\cite{revay2023recurrent}. In this work, we have derived a full proof for completeness.}  We are now ready to present the proof of Theorem~\ref{theorem:direct_parametrization_robust}.

\begin{proof}
\emph{Part 1)}: {Define $\mathcal{Q} = Q-\delta I$. Since $\mathcal{Q}\prec0$ and $R-S(Q-\delta I)^{-1}S^{\top}\succ 0$, there exist $L_Q\succ 0$ and $L_R \succ 0$ with}
\begin{equation}\label{eq:L_Q_and_L_R}
    L_{Q}^\top L_{Q} = -\mathcal{Q} \:, \quad  L_{R}^\top L_{R} = R - S \mathcal{Q}^{-1} S^\top\,.
\end{equation}
Next, define $M = X_{3}^\top X_{3} + \epsilon I$ and
$\Tilde{F} = \left[(I-M)(I+M)^{-1}\right]_{p\times m}$. %
Construct $D_{22}$ as
\begin{equation}\label{eq:NODEREN_Last_D22_Robust}
    D_{22} = -\mathcal{Q}^{-1}S^\top + L_{Q}^{-1} \tilde{F} L_{R}.
\end{equation}
{We now show that this choice of $D_{22}$ leads to $\tilde{\mathcal{R}}\succ 0$. 
Let $\Xi = R + S D_{22} + D_{22}^\top S^\top + D_{22}^\top \mathcal{Q} D_{22}$. Since $\tilde{\mathcal{R}}=\Xi+\delta D_{22}^\top D_{22}$, it suffices to prove that $\Xi \succ 0$.}
After plugging \eqref{eq:NODEREN_Last_D22_Robust} in the definition of $\Xi$, we obtain $\Xi = L_R^{\top}(I-\Tilde{F}^{\top}\Tilde{F})L_R$
 by cancellation of addends and by using  \eqref{eq:L_Q_and_L_R}.
By \Cref{lemma:cayley_transform} and since $L_R\succ0$, we conclude that $\Xi \succ 0$.

Finally, we indicate how to construct $(Y,W,Z)$ so that \eqref{eq:theorem_robust_direct_parametrization} holds. Compute $\Psi$ from \eqref{eq:definition_Psi} and define
\begin{equation*}
H 
=
\begin{bmatrix}
    H_{11} & H_{12} \\
    H_{21} & H_{22}
\end{bmatrix}
=
X_R^\top X_R + \epsilon I + \Psi \,.
\end{equation*}
{Then, using $H$, it is possible to use the same steps of \Cref{theorem:direct_parametrization_contracting} to} compute $(Y,W,Z)$ from \eqref{eq:_y_y_top_eq_H_11}-\eqref{eq:W_from_H_22_Z_from_H_12}, and $P$ from \eqref{eq:P_construction_pos_def}.
This choice satisfies the equality \eqref{eq:theorem_robust_direct_parametrization} by construction.

\emph{Part 2)}: 
{We show how to parametrize the matrices in $\tilde{A}$.
First, note that $P\succ 0$ is valid due to \eqref{eq:P_construction_pos_def}. Construct $\Lambda$ and $D_{11}$ using \eqref{eq:Lambda_and_D11_from_W}, 
and define the matrices $(A,B_1,C_1,D_{11})$ according to \eqref{eq:A_B1_C1_from_U_Y_Z}. Set 
\begin{equation}
    \label{eq:T__Lambda_D_12}
    D_{12} = \Lambda^{-1} T\,.  
\end{equation}
Then, when choosing the parameters according to part~1 of this Theorem and plugging in the matrices just defined, $\Psi$ %
satisfies 
$\Psi= \Psi^{\top}\succeq 0$ since $\tilde{\mathcal{R}}\succ0$ and $Q\preceq0$. Moreover, $H - \Psi = X_R^{\top} X_R + \epsilon I \succ 0$. 
Thus, $H=H^{\top} \succ 0$, and we have $W = W^{\top} \succ 0$ and $\Lambda \succ 0$. } 
Thus, the {NodeREN} defined by $\{\tilde{A},\Tilde{b}\}$ complies with \eqref{ineq:LMI_NODEREN_robustness}, and thus it satisfies the incremental IQCs parametrized by $(Q,S,R)$. 
\end{proof}}
\end{document}